\tikzset{>=stealth}
\tikzset{>=stealth}
\tikzstyle{vertex} = [circle, minimum size = 1.5mm, inner sep = 0mm, draw={black}, fill]
\tikzstyle{hyperedge} = [rectangle, minimum width = 5mm, minimum height = 5mm, draw, inner sep = 0mm]
\tikzstyle{hyperedgewide} = [rectangle, minimum width = 8mm, minimum height = 5mm, draw, inner sep = 0mm]
\tikzstyle{HG} = [align = center]
\tikzstyle{circledge} = [circle, minimum size = 7mm, inner sep = 0mm, color=black, draw]
\tikzstyle{HG} = [align = center]
\tikzstyle{circledge} = [circle, minimum size = 7mm, inner sep = 0mm, color=black, draw]
\newcommand{\type}{\mathit{type}}
\newcommand{\Nat}{\mathbb{N}}
\newcommand{\fs}{{\mathrm{fs}}}
\newcommand{\bn}{{\mathrm{bn}}}
\newcommand{\st}{{\mathrm{st}}}
\newcommand{\BGpt}{\mathit{PsT}}
\newcommand{\BG}{\mathrm{BG}}
\newcommand{\TBG}{\mathit{TBG}}
\newcommand{\THyp}{\mathit{TH}}
\newcommand{\tv}{\mathit{tv}}
\newcommand{\Bond}{\mathcal{B}}
\newcommand{\lEnd}[1]{\multicolumn{1}{|c}{#1}}
\newcommand{\rEnd}[1]{\multicolumn{1}{c|}{#1}}
\newcommand{\lrEnd}[1]{\multicolumn{1}{|c|}{#1}}
\begin{document}

\title{Bonding Grammars}

\author{Tikhon Pshenitsyn\orcidID{0000-0003-4779-3143}}
\authorrunning{T. Pshenitsyn}

\institute{
Steklov Mathematical Institute of Russian Academy of Sciences
\\
8 Gubkina St., Moscow 119991, Russian Federation
\\
\email{tpshenitsyn at mi-ras.ru}
}
\maketitle              
\begin{abstract}
We introduce bonding grammars, a graph grammar formalism developed to model DNA computation by means of graph transformations. It is a modification of fusion grammars introduced by Kreowski, Kuske and Lye in 2017. Bonding is a graph transformation that consists of merging two hyperedges into a single larger one. We show why bonding models interaction between DNA molecules better than fusion. Then, we investigate formal properties of this formalism. Firstly, we study the relation between bonding grammars and hyperedge replacement grammars proving that each of these kinds of grammars generates a language the other one cannot generate. Secondly, we prove that bonding grammars naturally generalise regular sticker systems. Finally, we prove that the membership problem for bonding grammars is NP-complete and, moreover, that some bonding grammar generates an NP-complete set.

\keywords{graph grammar \and fusion grammar \and sticker system \and hyperedge replacement grammar \and NP-complete \and DNA computing}
\end{abstract}

\section{Introduction}\label{sec_introduction}
There is a variety of formal apporaches developed to formalise DNA computation (see the overview of those in \cite{PaunRS98}). Since a nucleotide chain can be described as a string over the alphabet $\{A,T,C,G\}$, it is natural to model DNA computation as a series of transformations of strings or of string-based structures. A prominent example of such a string-based approach is sticker systems \cite{PaunR98}. They deal with ``incomplete molecules'', structures defined as pairs of strings with certain properties, on which the sticking operation is defined. Thus, in general, most DNA-inspired formalisms belong to the area of formal grammars for strings, the area which also studies finite automata, context-free grammars etc. 

However, instead of simulating a DNA molecule by a generalised string, one could also represent it as a graph. This choice would enable one to describe the processes occurring between molecules by means of graph transformations and thus to enter the vast field of graph transformation and graph grammars (see its overview in \cite{Rozenberg97}). One could also benefit from working with graphs because, on the one hand, reasonings would become more general, while, on the other hand, definitions and proofs would become more concise. In particular, the notion of a graph is simpler and more common than the notion of an incomplete molecule from the field of sticker systems, the latter being ad-hoc to a certain degree. 

In 2017, a novel kind of graph grammars was introduced by Kreowski, Kuske and Lye called \emph{fusion grammars} \cite{KreowskiKL17}. As the authors say, it is inspired by various fusion processes occurring in nature and abstract science, and, in particular, by fusion in DNA double strands according to the Watson-Crick complementarity. The thesis \cite{Lye22} is an extensive survey of fusion grammars. In the fusion grammar approach, molecules are modeled by hypergraphs (see the formal definition of a hypergraph in Section \ref{sec_preliminaries}). Hypegraphs have labels on hyperedges; the label alphabet includes fusion labels, each fusion label $A$ being accompanied with the complementary one $\overline{A}$. Hyperedges with complementary labels can be fused, which means that their corresponding incident vertices are identified, and the hyperedges themselves are removed. See an example of fusion below.
\begin{equation}\label{eq_fusion}
	\vcenter{\hbox{{\tikz[baseline=.1ex]{
					\node[vertex] (V1) at (0,0) {};
					\node[vertex, blue] (V2) at (1.2,0) {};
					\node[vertex, blue] (V3) at (2.4,0) {};
					\node[vertex] (V4) at (3.6,0) {};
					\foreach \i in {1,4}
					{
						\node[vertex] (2V\i) at ($(V\i)+(0,-0.4)$) {};
					}
					\foreach \i in {2,3}
					{
						\node[vertex,blue] (2V\i) at ($(V\i)+(0,-0.4)$) {};
					}
					\draw[-latex, thick] (V1) -- node[above] {$G$} (V2);
					\draw[-latex, thick, blue] (V2) -- node[above] {$A$} (V3);
					\draw[-latex, thick] (V3) -- node[above] {$C$} (V4);
					\draw[-latex, thick] (2V1) -- node[below] {$C$} (2V2);
					\draw[-latex, thick, blue] (2V2) -- node[below] {$\overline{A}$} (2V3);
					\draw[-latex, thick] (2V3) -- node[below] {$G$} (2V4);
	}}}}
	\quad\Rightarrow\quad
	\vcenter{\hbox{{\tikz[baseline=.1ex]{
					\node[vertex] (V1) at (0,0) {};
					\node[vertex] (V4) at (3.6,0) {};
					\node[vertex] (2V1) at (0,-0.4) {};
					\node[vertex] (2V4) at (3.6,-0.4) {};
					\node[vertex] (X1) at (1.2,-0.2) {};
					\node[vertex] (X2) at (2.4,-0.2) {};
					\draw[-latex, thick] (V1) to[bend left = 10] node[above] {$G$} (X1);
					\draw[-latex, thick] (X2) to[bend left = 10] node[above] {$C$} (V4);
					\draw[-latex, thick] (2V1) to[bend right = 10] node[below] {$C$} (X1);
					\draw[-latex, thick] (X2) to[bend right = 10] node[below] {$G$} (2V4);
	}}}}
\end{equation}

In its simplest form, a fusion grammar is defined as a tuple $Z(1),\dotsc,Z(k)$ consisting of connected hypergraphs $Z(i)$, which are informally interpreted as molecules. One can take arbitrary many exemplars of each ``molecule'' $Z(i)$ for $i=1,\dotsc,k$ and then apply fusions to the resulting collection of hypergraphs. Any connected component of the hypergraph obtained after fusions is said to be generated by the grammar. The definitions from \cite{KreowskiKL17} also involve a filtering procedure, which, however, we are not going to discuss in this paper.

Mathematically, fusion grammar is a nice graph grammar formalism with interesting relations to other graph grammars \cite{Lye22}. However, it has two drawbacks. Firstly, one would expect that, in a graph-based approach modelling DNA computing, one should represent a DNA strand as a path graph. For example, one would anticipate that the strand defined by the sequence of nucleobases $GAC$ should be represented by the graph $\vcenter{\hbox{{\tikz[baseline=.1ex]{
				\node[vertex] (V1) at (0,0) {};
				\node[vertex] (V2) at (0.8,0) {};
				\node[vertex] (V3) at (1.6,0) {};
				\node[vertex] (V4) at (2.4,0) {};
				\draw[-latex, thick] (V1) -- node[above] {$G$} (V2);
				\draw[-latex, thick] (V2) -- node[above] {$A$} (V3);
				\draw[-latex, thick] (V3) -- node[above] {$C$} (V4);
}}}}$. Then, one would expect that, in (\ref{eq_fusion}), if $\overline{A}$ stands for $T$, i.e. for the nucleobase complementary to $A$, then this fusion rule application models forming a bond between the DNA strands $GAC$ and $CTG$. However, the result of fusion presented in (\ref{eq_fusion}) does not look as two strands connected to each other. From the biological point of view, if an edge of a graph is interpreted as a nucleobase, then it is very strange that fusion results in disappearance of the nucleobases participating in fusion. Instead, one would expect that fusion of two molecules binds them together, i.e. that fusion of two connected hypergraphs always results in a larger connected hypergraph. This is not the case in (\ref{eq_fusion}). 

The second drawback of fusion grammars is that no fast algorithms are known which check whether a hypergraph is generated by a fusion grammar; the recently claimed upper complexity bound is NEXPTIME \cite{Pshenitsyn23}. Even decidability of this problem is hard to prove, which is again due to the fact that edges disappear after fusion without a trace. The question arises if the supposedly high algorithmic complexity of fusion grammars pays back; one might expect that the basic principles of DNA computing could be described by a graph formalism of less complexity.

A natural and simple idea that solves all the abovementioned problems is to change the fusion operation in the following way. Instead of making two hyperedges disappear and their attachment vertices be identified, let us combine two hyperedges into a single larger hyperedge without identifying vertices. The fusion rule application (\ref{eq_fusion}) then turns into the following one:
\begin{equation}\label{eq_bonding}
	\vcenter{\hbox{{\tikz[baseline=.1ex]{
					\node[vertex] (V1) at (0,0) {};
					\node[vertex, blue] (V2) at (1.2,0) {};
					\node[vertex, blue] (V3) at (2.4,0) {};
					\node[vertex] (V4) at (3.6,0) {};
					\node[vertex] (2V1) at (0,-0.4) {};
					\node[vertex, blue] (2V2) at (1.2,-0.4) {};
					\node[vertex, blue] (2V3) at (2.4,-0.4) {};
					\node[vertex] (2V4) at (3.6,-0.4) {};
					\draw[-latex, thick] (V1) -- node[above] {$G$} (V2);
					\draw[-latex, thick, blue] (V2) -- node[above] {$A$} (V3);
					\draw[-latex, thick] (V3) -- node[above] {$C$} (V4);
					\draw[-latex, thick] (2V1) -- node[below] {$C$} (2V2);
					\draw[-latex, thick, blue] (2V2) -- node[below] {$T$} (2V3);
					\draw[-latex, thick] (2V3) -- node[below] {$G$} (2V4);
	}}}}
	\quad\Rightarrow\quad
	\vcenter{\hbox{{\tikz[baseline=.1ex]{
					\node[vertex] (V1) at (0,0) {};
					\node[vertex] (V2) at ($(V1)+(1.2,0)$) {};
					\node[vertex] (V3) at ($(V2)+(2,0)$) {};
					\node[vertex] (V4) at ($(V3)+(1.2,0)$) {};
					\foreach \i in {1,2,3,4}
					{
						\node[vertex] (2V\i) at ($(V\i)+(0,-0.4)$) {};
					}
					\draw[-latex, thick] (V1) -- node[above] {$G$} (V2);
					\draw[-latex, thick] (V3) -- node[above] {$C$} (V4);
					\draw[-latex, thick] (2V1) -- node[below] {$C$} (2V2);
					\draw[-latex, thick] (2V3) -- node[below] {$G$} (2V4);
					\node[hyperedge] (E) at (2.2,-0.2) {\,$A \otimes T$\,};
					\draw[-] (E) -- node[above] {\scriptsize 1} (V2);
					\draw[-] (E) -- node[above] {\scriptsize 2} (V3);
					\draw[-] (E) -- node[below] {\scriptsize 3} (2V2);
					\draw[-] (E) -- node[below] {\scriptsize 4} (2V3);
	}}}}
\end{equation}
Here $A \otimes T$ is the label of the resulting hyperedge, which stands for the combination of $A$ and $T$. We call such an operation of combining two hyperedges into a single one \emph{bonding} because it models the situation where two nucleobases are paired with each other via a hydrogen bond thus forming a base pair. If the edges with the labels $A$ and $T$ from (\ref{eq_bonding}) are interpreted as nucleobases, then the hyperedge with the label $A \otimes T$, which is obtained after bonding, should be interpreted as a base pair. Thus bonding describes the basic interaction between DNA molecules in a natural way.

On the basis of bonding, in Section \ref{sec_bonding}, we define the notion of \emph{bonding grammar} analogous to the notion of fusion grammar. In Section \ref{ssec_properties}, we study properties of this new kind of graph grammars, in particular, its relation to hypergraph context-free grammars (also known as hyperedge replacement grammars \cite{Rozenberg97}). We show that either of the two formalisms is able to generate a language the other one cannot generate. In Section \ref{ssec_bonding_sticker}, we show that bonding grammars naturally extend regular sticker systems, and thus the former can be considered as a hypergraph counterpart of the latter. In Section \ref{ssec_bonding_membership}, we prove that the membership problem for bonding grammars is NP-complete, which is better than the known NEXPTIME algorithm for fusion grammars. This shows that bonding grammars have reasonable level of complexity, many problems from the graph theory being NP-complete. To prove NP-hardness we use the problem of partitioning a graph into triangles for graphs of bounded degree \cite{vanRooijvKNB12}.

\section{Preliminaries}\label{sec_preliminaries}
By $[k]$ we denote the set $\{1,\dotsc,k\}$. In this work, elements of $A^k$ are treated either as strings (words), as tuples or as multisets with a fixed linear order. If $w \in A^k$, then its $i$-th component is denoted by $w(i)$ (for $i \in [k]$). If $w \in A^k$, then $\vert w \vert = k$ is the length of $w$ as a string. As usually, $A^\ast$ is the union of $A^k$ for $k \in \Nat$, and $A^0 = \{\lambda\}$ consists of the empty word. The notation $a \in w$ for $w \in A^k$ stands for $a \in \{w(1),\dotsc,w(k)\}$.

\subsubsection{Hypergraphs}\label{ssec_preliminaries_hypergraph}
Our definition of a hypergraph is similar to that from \cite{KreowskiKL17} where hypergraphs are directed and have labels on hyperedges. This way of defining hypergraph is common in the literature on hyperedge replacement grammars \cite{Engelfriet97}.

A \emph{typed set} $M$ is the set $M$ along with a fixed type function $\type:M \to \Nat$. 
\begin{definition}
	Given a typed set $\Sigma$ called \emph{the label alphabet}, a \emph{hypergraph $H$ over $\Sigma$} is a tuple $H = (V, E, att, lab)$ where $V$ is a finite set of \emph{vertices}, $E$ is a finite set of \emph{hyperedges}, $att: E \to V^\ast$ is an \emph{attachment} (a function that assigns a string consisting of vertices to each hyperedge; these vertices are called \emph{attachment vertices}), and $lab: E \to \Sigma$ is a \emph{labeling} such that $\type(lab(e))=\vert att(e)\vert$ for any $e\in E_H$.
	The components of $H$ are denoted by $V_H$, $E_H$, $att_H$ and $lab_H$. 
	The set of hypergraphs over $\Sigma$ is denoted by $\mathcal{H}(\Sigma)$. 
\end{definition}
In drawings of hypergraphs, small circles represent vertices, labeled rectangles represent hyperedges with their labels, and attachment is represented by numbered lines. If a hyperedge has exactly two attachment vertices, then it is depicted by a labeled arrow that goes from the first attachment vertex to the second one.
\begin{example}\label{example_hypergraph}
	The hypergraph $(V,E,att,lab)$ where $V = \{v_1,v_2,v_3\}$, $E = \{e_1,e_2,e_3\}$, $att(e_1) = v_1$, $att(e_2) = v_1v_2v_3$, $att(e_3) = v_2v_2$, $lab(e_1) = X$, $lab(e_2) = Y$, $lab(e_3) = Z$ looks as follows:
	$\vcenter{\hbox{{\tikz[baseline=.1ex]{
					\node[hyperedge] (E1) at (0,0) {$X$};
					\node[hyperedge] (E2) at (1.3,0) {$Y$};
					\node[vertex] (V1) at (0.65,0) {};
					\node[vertex] (V2) at (1.95,0) {};
					\node[vertex] (V3) at (1.3,0.65) {};
					\draw[latex-, thick] (V2) to[out=-20,in=20,looseness=30] node[right] {$Z$} (V2);
					\draw[-] (E2) -- node[above] {\scriptsize 1} (V1);
					\draw[-] (E2) -- node[above] {\scriptsize 2} (V2);
					\draw[-] (E2) -- node[right] {\scriptsize 3} (V3);
					\draw[-] (E1) -- (V1);
	}}}}$. Here the label $X$ has type 1, the label $Y$ has type 3, and the label $Z$ has type 2.
\end{example}

The function $\type: E_H \to \Nat$ returns the number of attachment vertices of a hyperedge in the hypergraph $H$: $\type(e) = \vert att_H(e) \vert$. 

Let $H$ be a hypergraph. A sequence $(i_1,e_1,o_1),\dotsc,(i_n,e_n,o_n) \in (\Nat \times E_H \times \Nat)^\ast$ is a \emph{path between $v \in V_H$ and $v^\prime \in V_{H}$} if $att_H(e_1)(i_1)=v$, $att_H(e_n)(o_n) = v^\prime$, and $att_H(e_k)(o_k) = att_H(e_{k+1})(i_{k+1})$ for $k=1,\dotsc,n-1$. A hypergraph is \emph{connected} if there is a path between any two vertices. 

The notion of an isomorphism of hypergraphs is standard. As is usually done in the literature on graph grammars (see discussion in \cite{Rozenberg97}), we often identify isomorphic hypergraphs because we are interested in their structure rather than in what objects their vertices and hyperedges are. So, the notation $G = H$ usually means that $G$ and $H$ are isomorphic.

Below we define several operations on hypergraphs. 
\begin{enumerate}
	
	\item The \emph{disjoint union} $H_1+H_2$ of hypergraphs $H_1$, $H_2$ is the hypergraph $( V_{H_1}\sqcup V_{H_2}, E_{H_1}\sqcup E_{H_2}, att, lab )$ such that $att|_{H_i} = att_{H_i}$, $lab|_{H_i} = lab_{H_i}$ ($i=1,2$). That is, we just put the hypergraphs $H_1$ and $H_2$ together without affecting their structures. 
	Let $k \cdot H$ be the disjoint union of $k$ copies of $H$. Let $0\cdot H$ be the hypergraph with no vertices and no hyperedges. 
	
	\item Given a vector $m = (m(1),\dotsc,m(k)) \in \Nat^k$ and a tuple of hypergraphs $H = (H(1),\dotsc,H(k))$, let $m \cdot H$ denote $m(1) \cdot H(1)+\dotsc+m(k)\cdot H(k)$.
	
	\item Let $H$ be a hypergraph and let $E \subseteq E_{H}$. Then $H - E$ is the hypergraph
	\\ $(V_{H},E_{H} \setminus E, att_H\restriction_{E_{H} \setminus E},lab_H \restriction_{E_{H} \setminus E})$; i.e. we simply remove hyperedges that belong to $E$ from $H$.
	
	\item Let $H$ be a hypergraph and let $\equiv$ be an equivalence relation on $V_H$. Then $H/{\equiv}$ is the hypergraph with the set $\{[v]_\equiv \mid v \in V_H\}$ of vertices, the set $E_H$ of hyperedges, the same labeling $lab_H$ and with the new attachment defined as follows: $att_{H/\equiv}(e) = [att_{H}(e)(1)]_\equiv [att_{H}(e)(2)]_\equiv \dotsc [att_{H}(e)(\type(e))]_\equiv$.
\end{enumerate}

Sets of hypergraphs are called \emph{hypergraph languages}. The ``canonical'' formalism for generating hypergraph languages is \emph{hyperedge replacement grammar}, also known as hypergraph context-free grammar. It naturally generalises context-free grammars for strings and, in particular, enjoys many similar properties, such as the pumping lemma, the Parikh theorem etc. We shall compare generative capacity of bonding grammars with that of hyperedge replacement grammars. However, we are not going to provide the formal definition of the latter because we do not need it for reasonings. We refer the reader to \cite{Engelfriet97,Rozenberg97} for the definitions.

\subsubsection{Fusion Grammar}\label{ssec_preliminaries_FG}

Fusion grammar was introduced in \cite{KreowskiKL17}. After that, its definition has undergone slight modifications, its most recent version is presented in \cite{Lye22}. There, the notion of fusion grammars is based on the fusion operation along with additional procedures such as filtering the hypergraphs using markers and removing hyperedges labeled by connector labels. In this paper, we are not interested in the filtering procedure as well as in connector labels, the latter being biologically ungrounded and being used in \cite{Lye22} for technical purposes. The formalism we are going to introduce is called \emph{fusion grammar without markers} in \cite{KreowskiKL17}; we, however, call it simply \emph{fusion grammar} for the sake of brevity.

Let $T$ and $F$ be two disjoint finite typed alphabets called the \emph{terminal alphabet} and the \emph{fusion alphabet} resp. Let $\overline{F} = \{\overline{A} \mid A \in F\}$ be the alphabet of complementary fusion labels.
\begin{definition}
	A \emph{fusion grammar} is a triple $FG = (Z,F,T)$ where \\ $Z = (Z(1),\dotsc,Z(k))$ is a tuple of connected hypergraphs $Z(i) \in \mathcal{H}(T \cup F \cup \overline{F})$.
\end{definition}
\begin{definition}
	Let $H$ be a hypergraph with two hyperedges $e,\overline{e} \in E_H$ such that $lab_H(e) = A \in F$ and $lab_H\left(\overline{e}\right) = \overline{A} \in \overline{F}$. Let $X = H - \{e,\overline{e}\}$ be obtained from $H$ by removing $e$ and $\overline{e}$. Let $H^\prime = X/{\equiv}$ where $\equiv$ is the smallest equivalence relation on $V_H$ such that $att_H(e)(i) \equiv att_H(\overline{e})(i)$ for all $i =1, \dotsc, \type(A)$. In other words, $H^\prime$ is obtained from $X$ by identifying corresponding attachment vertices of $e$ and $\overline{e}$. Then we say that $H^\prime$ is obtained from $H$ by \emph{fusion} and denote this as $H \Rightarrow_\fs H^\prime$.
\end{definition}
\begin{definition}\label{def_FG_generates_hypergraph}
	A fusion grammar $FG = (Z,F,T)$ such that $\vert Z \vert = k$ \emph{generates a hypergraph $H$} if there exists $m \in \Nat^k$ such that $m \cdot Z \Rightarrow_\fs^\ast G$, and $H$ is a connected component of $G$.
\end{definition}

\begin{example}
	Let $(Z,F,T)$ be the fusion grammar where $Z = (Z(1),Z(2))$ consists of the hypergraphs $Z(1) = \vcenter{\hbox{{\tikz[baseline=.1ex]{
					\node[vertex] (V1) at (0,0) {};
					\node[vertex] (V2) at (0.8,0) {};
					\node[vertex] (V3) at (1.6,0) {};
					\node[vertex] (V4) at (2.4,0) {};
					\draw[-latex, thick] (V1) -- node[above] {$\overline{C}$} (V2);
					\draw[-latex, thick] (V2) -- node[above] {$A$} (V3);
					\draw[-latex, thick] (V3) -- node[above] {$C$} (V4);
	}}}}$ 
	and 
	$Z(2) = \vcenter{\hbox{{\tikz[baseline=.1ex]{
	\node[vertex] (V1) at (0,0) {};
	\node[vertex] (V2) at (0.8,0) {};
	\node[vertex] (V3) at (1.6,0) {};
	\node[vertex] (V4) at (2.4,0) {};
	\draw[-latex, thick] (V1) -- node[above] {$C$} (V2);
	\draw[-latex, thick] (V2) -- node[above] {$\overline{A}$} (V3);
	\draw[-latex, thick] (V3) -- node[above] {$\overline{C}$} (V4);
	}}}}$. Then this fusion grammar generates the hypergraph 
	$
	H_0 = \vcenter{\hbox{{\tikz[baseline=.1ex]{
		\node[vertex] (V1) at (0,0) {};
		\node[vertex] (V2) at (0.8,0) {};
		\node[vertex] (V3) at (1.6,0) {};
		\draw[-latex, thick] (V1) -- node[above] {$\overline{C}$} (V2);
		\draw[latex-, thick] (V2) -- node[above] {$C$} (V3);
	}}}}$, as (\ref{eq_fusion}) shows. Indeed, it is shown there that $Z(1) + Z(2) \Rightarrow_\fs H$ where $H$ contains $H_0$ as a connected component (assuming that $G = \overline{C}$ and $T = \overline{A}$).
\end{example}

\subsubsection{Sticker Systems}
An overview of sticker systems can be found in \cite{PaunRS98,PaunR98}. We are going to use the definitions from \cite{KutribW21}. In the original definition, sticker systems generate double strands (pairs of strings) such that symbols in the upper strand are in the complementarity relation with those in the lower strand. However, for the sake of simplifying presentation, following \cite{KutribW21}, we consider only sticker systems with the identity complementarity relation. 

Let $\Sigma$ be a finite alphabet. The set $D^=_\Sigma$ consists of complete double strands of the form
$\begin{bmatrix}
	w \\ w \\
\end{bmatrix}$ for $w \in \Sigma^\ast$, $w \ne \lambda$. The set $E_\Sigma$ of sticky ends consists of pairs of strings of the form 
$\begin{pmatrix}
	v \\ \lambda \\
\end{pmatrix}$
and
$\begin{pmatrix}
	\lambda \\ v \\ 
\end{pmatrix}$
where $v \in \Sigma^\ast$; if $v = \lambda$, then the corresponding sticky end is denoted by $\lambda$. Let $LR_\Sigma = E_\Sigma \cdot D^=_\Sigma \cdot E_\Sigma$ and let $R_\Sigma = D^=_\Sigma \cdot E_\Sigma$; here the product stands for juxtaposition (and $\lambda$ is the unit of this product). Finally, the set $D_\Sigma$ of \emph{dominoes} (called \emph{incomplete molecules} in \cite{PaunRS98}) equals $E_\Sigma \cup LR_\Sigma$. 
Below, we present several examples of dominoes along with another way of their representation, which we call \emph{domino representation}.

$$
\begin{pmatrix}
	ab \\ \lambda \\
\end{pmatrix}
\begin{bmatrix}
	cd \\ cd \\
\end{bmatrix}
\begin{pmatrix}
	\lambda \\ e \\
\end{pmatrix}
=
\vcenter{\hbox{
		\begin{tabular}{ccccc}
			\cline{1-4}
			\lEnd{$a$} & $b$ & $c$ & \rEnd{$d$} &     \\\cline{1-2}\cline{5-5}
			&     & \lEnd{$c$} & $d$ & \rEnd{$e$} \\
			\cline{3-5}
		\end{tabular}
}}
;\qquad
\begin{bmatrix}
	aa \\ aa \\
\end{bmatrix}
\begin{pmatrix}
	e \\ \lambda \\ 
\end{pmatrix}
=
\vcenter{\hbox{
		\begin{tabular}{|ccc}
			\cline{1-3}
			$a$ & $a$ & \rEnd{$e$} \\\cline{3-3}
			$a$ & \rEnd{$a$} &  \\
			\cline{1-2}
		\end{tabular}
}}
;\qquad
\begin{pmatrix}
	\lambda \\ abc \\
\end{pmatrix}
=
\vcenter{\hbox{
		\begin{tabular}{ccc}
			&  & \\\cline{1-3}
			\lEnd{$a$} & $b$ & \rEnd{$c$} \\
			\cline{1-3}
		\end{tabular}
}}
.
$$

Given $x \in LR_\Sigma$ and $y \in D_\Sigma$, one can define the sticking operation $x \cdot y$ and $y \cdot x$. Informally, sticking $x \cdot y$ of $x$ and $y$ consists of placing $y$ to the right from $x$ and then assembling them into a single domino if the right sticky end of $x$ matches the left sticky end of $y$. For example:
$$
\begin{pmatrix}
	ab \\ \lambda \\
\end{pmatrix}
\begin{bmatrix}
	cd \\ cd \\
\end{bmatrix}
\begin{pmatrix}
	\lambda \\ e \\
\end{pmatrix}
\cdot
\begin{pmatrix}
	e \\ \lambda \\ 
\end{pmatrix}
\begin{bmatrix}
	aab \\ aab \\
\end{bmatrix}
=
\begin{pmatrix}
	ab \\ \lambda \\
\end{pmatrix}
\begin{bmatrix}
	cdeaab \\ cdeaab \\
\end{bmatrix}
=
\vcenter{\hbox{
		\begin{tabular}{cccccccc|}
			\cline{1-8}
			\lEnd{$a$} & $b$ & $c$ & $d$ &  $e$ & $a$ & $a$ & $b$   \\\cline{1-2}
			&     & \lEnd{$c$} & $d$ & $e$ & $a$ & $a$ & $b$ \\
			\cline{3-8}
		\end{tabular}
}}
$$
The formal definition of sticking can be found in \cite{PaunRS98} (it is too large to provide it here). A \emph{sticking rule} is a pair of dominoes $r = (d_1,d_2) \in D_\Sigma^2$ (at least one domino in $r$ must be non-empty). An application of $r$ to $u \in LR_\Sigma$ consists of sticking $d_1$ to the left from $u$ and $d_2$ to the right from $u$ resulting in $u^\prime = d_1 \cdot u \cdot d_2$. This is denoted by $u \Rightarrow_\st u^\prime$. A \emph{sticker system} is a triple $SS = (\Sigma, A, D)$ where $A \subseteq LR_\Sigma$ is a finite set of axioms and $D$ is a finite set of sticking rules. We say that a sticker system \emph{generates} $d \in LR_\Sigma$ if $a \Rightarrow_\st^\ast d$ for some $a \in A$. A sticker system $(\Sigma,A,D)$ is \emph{regular} if each rule in $D$ is of the form $(\lambda,d)$ and if $A \subseteq R_\Sigma$. Note that, if a domino $d$ is generated by a regular sticker system, then $d \in R_\Sigma$.

\section{Bonding Grammar}\label{sec_bonding}

Our goal is to develop a graph grammar formalism to adequately model DNA computation. To make it biologically grounded, we adhere to the following principle: a hyperedge of a hypergraph must either represent a nucleotide or a bond between nucleotides. Fusion grammars do not meet this requirement because, if two hyperedges $e_1$ and $e_2$ are fused, then they disappear, which is incorrect from the biological point of view (how can two compounds vanish after being connected?). During annealing of two single DNA strands, hydrogen bonds arise, and pairs of bonded nucleobases form new units called \emph{base pairs}. This leads us to the following natural idea: if $e_1$ and $e_2$ are two hyperedges representing two complementary nucleobases, then the process of fusing them should be modeled by a graph transformation where these hyperedges merge into a single hyperedge. This idea is formalised in the definitions of \emph{bonding} and of \emph{bonding grammar}.
\begin{definition}
	A \emph{bonding grammar} is a tuple $(Z,N,T,\otimes)$ where
	\begin{enumerate}
		\item $N$ and $T$ are disjoint finite typed sets;
		\item $Z = (Z(1),\dotsc,Z(k))$ is a tuple of connected hypergraphs $Z(i) \in \mathcal{H}(N \cup T)$;
		\item $\otimes:N\times N \to T$ is a partial injective function called the \emph{bond function} such that, if $A_1 \otimes A_2$ is defined, then $\type(A_1 \otimes A_2) = \type(A_1) + \type(A_2)$.
	\end{enumerate}
\end{definition}
The bond function takes two labels corresponding to nucleobases and returns a label that denotes the resulting base pair. E.g. if $C,G \in N$, then $C \otimes G$ denotes the base pair of cytosine and guanine. We consider the label $C \otimes G$ to be terminal. The domain of the bond function is interpreted as the complementarity relation on $N$, which describes between which pairs from $N\times N$ bonds can arise.

It is natural to assume that the bond function is injective because, if a nucleobase $A_1$ is paired with a nucleobase $A_2$ and $B_1$ is paired with $B_2$ for $(A_1,A_2) \ne (B_1,B_2)$, then the resulting base pairs are also different, so $A_1 \otimes A_2 \ne B_1 \otimes B_2$. Of course, one could define bonding grammars without the requirement on $\otimes$ being injective. However, this restriction simplifies technical arguments; moreover, bond functions are indeed injective in all the examples of bonding grammars we are interested in. Thus we decided to include injectivity of $\otimes$ in the definition.
\begin{definition}
	Let $H$ be a hypergraph with two distinct hyperedges $e_1,e_2 \in E_H$ such that $lab_H(e_i) = A_i \in N$ for $i=1,2$. Assume that $A_1 \otimes A_2$ is defined. Let us define the hypergraph $H^\prime$ as follows.
	\begin{itemize}
		\item $V_{H^\prime} = V_H$; \qquad $E_{H^\prime} = (E_H \setminus \{e_1,e_2\}) \cup \{e^\prime\}$ where $e^\prime$ is a new hyperedge;
		\item $att_{H^\prime}(e) = att_H(e)$ and $lab_{H^\prime}(e) = lab_H(e)$ for $e \ne e^\prime$;
		\item $att_{H^\prime}(e^\prime) = att_H(e_1)att_H(e_2)$ and $lab_{H^\prime}(e^\prime) = A_1 \otimes A_2$.
	\end{itemize}
	We say that $H^\prime$ is obtained from $H$ by \emph{bonding} and denote this by $H \Rightarrow_{BG} H^\prime$ or by $H \Rightarrow_\bn H^\prime$ (if the grammar is clear from the context).
\end{definition}

Informally, the bonding operation is joining two hyperedges of a hypergraph labeled by $A_1$ and $A_2$ into a single hyperedge labeled by $A_1 \otimes A_2$. For example, if $\type(A_1) = \type(A_2) = 2$, then bonding looks as follows:
$$
\vcenter{\hbox{{\tikz[baseline=.1ex]{
				\node[vertex] (V1) at (0,0) {};
				\node[vertex] (V2) at (0,0.7) {};
				\draw[-latex, thick] (V1) -- node[right] {$A_1$} (V2);
}}}}
\;
\vcenter{\hbox{{\tikz[baseline=.1ex]{
				\node[vertex] (V1) at (0,0) {};
				\node[vertex] (V2) at (0,0.7) {};
				\draw[-latex, thick] (V1) -- node[left] {$A_2$} (V2);
}}}}
\;\;\Rightarrow_\bn\;\;
\vcenter{\hbox{{\tikz[baseline=.1ex]{
				\node[vertex] (V11) at (0,0) {};
				\node[vertex] (V21) at (0,0.7) {};
				\node[vertex] (V12) at (2.4,0) {};
				\node[vertex] (V22) at (2.4,0.7) {};
				\node[hyperedge] (E) at (1.2,0.35) {\,$A_1 \otimes A_2$\,};
				\draw[-] (E) -- node[below] {\scriptsize 1} (V11);
				\draw[-] (E) -- node[below] {\scriptsize 3} (V12);
				\draw[-] (E) -- node[below] {\scriptsize 2} (V21);
				\draw[-] (E) -- node[below] {\scriptsize 4} (V22);
}}}}
$$
\begin{example}\label{example_bonding_DNA}
	Using bonding, the process of pairing two single DNA strands, say, $GAC$ and $CTG$, is represented as follows (for $C \otimes G$ and $A \otimes T$ being defined):
\begin{equation*}
	\vcenter{\hbox{{\tikz[baseline=.1ex]{
					\node[vertex] (V1) at (0,0) {};
					\node[vertex] (V2) at (1.2,0) {};
					\node[vertex] (V3) at (2.4,0) {};
					\node[vertex] (V4) at (3.6,0) {};
					\node[vertex] (2V1) at (0,-0.4) {};
					\node[vertex] (2V2) at (1.2,-0.4) {};
					\node[vertex] (2V3) at (2.4,-0.4) {};
					\node[vertex] (2V4) at (3.6,-0.4) {};
					\draw[-latex, thick] (V1) -- node[above] {$G$} (V2);
					\draw[-latex, thick] (V2) -- node[above] {$A$} (V3);
					\draw[-latex, thick] (V3) -- node[above] {$C$} (V4);
					\draw[-latex, thick] (2V1) -- node[below] {$C$} (2V2);
					\draw[-latex, thick] (2V2) -- node[below] {$T$} (2V3);
					\draw[-latex, thick] (2V3) -- node[below] {$G$} (2V4);
	}}}}
	\quad\Rightarrow_\bn^\ast\quad 
	\vcenter{\hbox{{\tikz[baseline=.1ex]{
					\node[vertex] (V1) at (0,0) {};
					\node[vertex] (V2) at ($(V1)+(2,0)$) {};
					\node[vertex] (V3) at ($(V2)+(2,0)$) {};
					\node[vertex] (V4) at ($(V3)+(2,0)$) {};
					\foreach \i in {1,2,3,4}
					{
						\node[vertex] (2V\i) at ($(V\i)+(0,-0.4)$) {};
					}
					\node[hyperedge] (E) at (3,-0.2) {\,$A \otimes T$\,};
					\draw[-] (E) -- node[above] {\scriptsize 1} (V2);
					\draw[-] (E) -- node[above] {\scriptsize 2} (V3);
					\draw[-] (E) -- node[below] {\scriptsize 3} (2V2);
					\draw[-] (E) -- node[below] {\scriptsize 4} (2V3);
					\node[hyperedge] (E2) at (1,-0.2) {\,$C \otimes G$\,};
					\draw[-] (E2) -- node[above] {\scriptsize 3} (V1);
					\draw[-] (E2) -- node[above] {\scriptsize 4} (V2);
					\draw[-] (E2) -- node[below] {\scriptsize 1} (2V1);
					\draw[-] (E2) -- node[below] {\scriptsize 2} (2V2);
					\node[hyperedge] (E3) at (5,-0.2) {\,$C \otimes G$\,};
					\draw[-] (E3) -- node[above] {\scriptsize 1} (V3);
					\draw[-] (E3) -- node[above] {\scriptsize 2} (V4);
					\draw[-] (E3) -- node[below] {\scriptsize 3} (2V3);
					\draw[-] (E3) -- node[below] {\scriptsize 4} (2V4);
	}}}}
\end{equation*}
\end{example}

\begin{definition}\label{def_BG_generates_hypergraph}
	A bonding grammar $BG = (Z,N,T,\otimes)$ such that $\vert Z \vert = k$ \emph{generates a hypergraph $H \in \mathcal{H}(N \cup T)$} if there exists $m \in \Nat^k$ such that $m \cdot Z \Rightarrow_{BG}^\ast H$.
	\\
	The \emph{language $L(BG)$} consists of hypergraphs from $\mathcal{H}(T)$ generated by $BG$.
\end{definition}
\begin{example}\label{example_pseudotori}
	The following example is inspired by \cite[Example 2]{KreowskiKL17}. Consider the bonding grammar $\BGpt = (Z,N,T,\otimes)$ with $N = \{A,B,C,D\}$ (all the labels are of type $1$) and $T = \{a,b\}$ (all the labels are of type $2$). Let $\otimes$ be defined as follows: $A \otimes C = a$, $B \otimes D = b$. Finally, let $Z$ consist of the hypergraph
	$Z(1) =
	\vcenter{\hbox{{\tikz[baseline=.1ex]{
					\node[vertex] (VO) at (0,0) {};
					\node[hyperedge] (EW) at (-0.6,0) {$C$};
					\node[hyperedge] (EE) at (0.6,0) {$A$};
					\node[hyperedge] (EN) at (0,0.5) {$B$};
					\node[hyperedge] (ES) at (0,-0.5) {$D$};
					\draw[-] (EW) -- (VO);
					\draw[-] (EE) -- (VO);
					\draw[-] (EN) -- (VO);
					\draw[-] (ES) -- (VO);
	}}}}$. Below, an example of a derivation starting with the hypergraph $2 \cdot Z(1)$ is presented.
	\begin{eqnarray*}
		2 \cdot Z(1) = \vcenter{\hbox{{\tikz[baseline=.1ex]{
						\node[vertex] (VO) at (0,0) {};
						\node[hyperedge] (EW) at (-0.6,0) {$C$};
						\node[hyperedge] (EE) at (0.6,0) {$A$};
						\node[hyperedge] (EN) at (0,0.5) {$B$};
						\node[hyperedge] (ES) at (0,-0.5) {$D$};
						\draw[-] (EW) -- (VO);
						\draw[-] (EE) -- (VO);
						\draw[-] (EN) -- (VO);
						\draw[-] (ES) -- (VO);
						\node[vertex] (2VO) at (1.9+0,0) {};
						\node[hyperedge] (2EW) at (1.9+-0.6,0) {$C$};
						\node[hyperedge] (2EE) at (1.9+0.6,0) {$A$};
						\node[hyperedge] (2EN) at (1.9+0,0.5) {$B$};
						\node[hyperedge] (2ES) at (1.9+0,-0.5) {$D$};
						\draw[-] (2EW) -- (2VO);
						\draw[-] (2EE) -- (2VO);
						\draw[-] (2EN) -- (2VO);
						\draw[-] (2ES) -- (2VO);
		}}}}
		\;\Rightarrow_\bn\;
		\vcenter{\hbox{{\tikz[baseline=.1ex]{
						\node[vertex] (VO) at (0,0) {};
						\node[hyperedge] (EW) at (-0.6,0) {$C$};
						\node[hyperedge] (EN) at (0,0.5) {$B$};
						\node[hyperedge] (ES) at (0,-0.5) {$D$};
						\draw[-] (EW) -- (VO);
						\draw[-] (EN) -- (VO);
						\draw[-] (ES) -- (VO);
						\node[vertex] (2VO) at (1.4+0,0) {};
						\node[hyperedge] (2EE) at (1.4+0.6,0) {$A$};
						\node[hyperedge] (2EN) at (1.4+0,0.5) {$B$};
						\node[hyperedge] (2ES) at (1.4+0,-0.5) {$D$};
						\draw[-] (2EE) -- (2VO);
						\draw[-] (2EN) -- (2VO);
						\draw[-] (2ES) -- (2VO);
						\draw[-latex, thick] (VO) -- node[above] {$a$} (2VO);
		}}}}
		\;\Rightarrow_\bn
		\\
		\Rightarrow_\bn\;
		\vcenter{\hbox{{\tikz[baseline=.1ex]{
						\node[vertex] (VO) at (0,0) {};
						\node[hyperedge] (EW) at (-0.6,0) {$C$};
						\draw[-] (EW) -- (VO);
						\node[vertex] (2VO) at (1.1+0,0) {};
						\node[hyperedge] (2EE) at (1.1+0.6,0) {$A$};
						\node[hyperedge] (2EN) at (1.1+0,0.5) {$B$};
						\node[hyperedge] (2ES) at (1.1+0,-0.5) {$D$};
						\draw[-] (2EE) -- (2VO);
						\draw[-] (2EN) -- (2VO);
						\draw[-] (2ES) -- (2VO);
						\draw[-latex, thick] (VO) -- node[above] {$a$} (2VO);
						\draw[latex-, thick] (VO) to[out=60,in=120,looseness=16] node[above] {$b$} (VO);
		}}}}
		\;\Rightarrow_\bn\;
		\vcenter{\hbox{{\tikz[baseline=.1ex]{
						\node[vertex] (VO) at (0,0) {};
						\node[hyperedge] (EW) at (-0.6,0) {$C$};
						\draw[-] (EW) -- (VO);
						\node[vertex] (2VO) at (1.1+0,0) {};
						\node[hyperedge] (2EE) at (1.1+0.6,0) {$A$};
						\draw[-] (2EE) -- (2VO);
						\draw[-latex, thick] (VO) -- node[above] {$a$} (2VO);
						\draw[latex-, thick] (VO) to[out=60,in=120,looseness=16] node[above] {$b$} (VO);
						\draw[latex-, thick] (2VO) to[out=60,in=120,looseness=16] node[above] {$b$} (2VO);
		}}}}
		\;\Rightarrow_\bn\;
		\vcenter{\hbox{{\tikz[baseline=.1ex]{
						\node[vertex] (VO) at (0,0) {};
						\node[vertex] (2VO) at (1.1+0,0) {};
						\draw[-latex, thick] (VO) to[bend left = 20] node[above] {$a$} (2VO);
						\draw[-latex, thick] (2VO) to[bend left = 20] node[below] {$a$} (VO);
						\draw[latex-, thick] (VO) to[out=60,in=120,looseness=16] node[above] {$b$} (VO);
						\draw[latex-, thick] (2VO) to[out=60,in=120,looseness=16] node[above] {$b$} (2VO);
		}}}}
	\end{eqnarray*}
	More generally, the bonding grammar $\BGpt$ is able to generate grids with $a$-labeled and $b$-labeled hyperedges and with borders marked by hyperedges of type $1$, as shown on the below left figure:
	\begin{center}
		\begin{minipage}{0.55\textwidth}
			\centering
			\begin{tikzpicture}
				\foreach \i in {0,1,2,3,4}
				{
					\foreach \j in {0,1,2}
					{
						\node[vertex] at (\i*0.9,\j*0.6) {};
						\ifthenelse{\i < 4}
						{
							\draw[-latex, thick] (\i*0.9,\j*0.6) -- node[above] {$a$} (\i*0.9+0.9,\j*0.6);
						}{}
						\ifthenelse{\j < 2}
						{
							\draw[-latex, thick] (\i*0.9,\j*0.6) -- node[right] {$b$} (\i*0.9,{(\j+1)*0.6});
						}{}
						\ifthenelse{\i = 0}
						{
							\node[hyperedge] (W\i) at ({\i*0.9-0.6},\j*0.6) {$C$};
							\draw[-] (W\i) -- (\i*0.9,\j*0.6);
						}
						{}
						\ifthenelse{\i = 4}
						{
							\node[hyperedge] (E\i) at ({\i*0.9+0.6},\j*0.6) {$A$};
							\draw[-] (E\i) -- (\i*0.9,\j*0.6);
						}
						{}
					}
					\node[hyperedge] (S\i) at (\i*0.9,-0.5) {$D$};
					\draw[-] (S\i) -- (\i*0.9,0);
					\node[hyperedge] (N\i) at (\i*0.9,0.6*2+0.5) {$B$};
					\draw[-] (N\i) -- (\i*0.9,0.6*2);
				}
			\end{tikzpicture}
		\end{minipage}
		\begin{minipage}{0.4\textwidth}
			\begin{tikzpicture}
				\begin{axis}
					[
					view={45}{65},
					hide axis,
					width=7cm,
					height=5cm
					]
					\addplot3[
					mesh,
					color=black,
					samples=12,
					domain=0:2*pi,y domain=0:2*pi,
					z buffer=sort
					]
					({(1+0.5*cos(deg(x)))*cos(deg(y+pi/2))}, 
					{(1+0.5*cos(deg(x)))*sin(deg(y+pi/2))}, 
					{sin(deg(x))});
				\end{axis}
			\end{tikzpicture}
		\end{minipage}
	\end{center}
	Clearly, if one also applies bonding to the remaining nonterminal hyperedges in this grid, they can obtain a toroidal graph. The shape of a toroidal graph is illustrated by the above right figure (the labels and directions of hyperedges are omitted). Therefore, the language $L(\BGpt)$ generated by the grammar of interest contains all toroidal graphs. However, obviously, there are many ways of bonding $A$-labeled and $C$-labeled hyperedges in a grid as well as $B$-labeled and $D$-labeled ones resulting in many different graphs. In \cite{KreowskiKL17}, they are called \emph{pseudotori} because they include tori but also Klein bottles and other shapes of graphs.
\end{example}
In Example \ref{example_pseudotori}, the hypergraph $Z(1)$ consists of several hyperedges of type 1 attached to the same vertex. We shall use hypergraphs of such a form later so let us introduce a more general definition.
\begin{definition}
	Let $S=(S(1),\dotsc,S(k))$ be a string of labels of type $1$. Then $\&(S)$ is the hypergraph $G$ defined as follows: $V_G = \{v\}$; $E_G = \{1,\dotsc,k\}$; $att_G(i) = v$, $lab_G(i) = S(i)$ for $i \in \{1,\dotsc,k\}$.
\end{definition}
For example, the hypergraph $Z(1)$ from Example \ref{example_pseudotori} equals $\&(A,B,C,D)$.

Here are two more examples of languages generated by bonding grammars.
\begin{proposition}\label{prop_regular}
	For each fixed $k \in \Nat$, the following languages are generated by bonding grammars:
	\begin{itemize}
		\item the set of connected $k$-regular (all vertices have degree $k$) directed graphs;
		\item the set of connected directed graphs of maximum degree $\le k$.
	\end{itemize}
\end{proposition}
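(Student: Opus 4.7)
The plan is to exhibit explicit bonding grammars for each item. For the first item, take $BG_1 = (Z, N, T, \otimes)$ with $N = \{A\}$ of type $1$, $T = \{a\}$ of type $2$, $A \otimes A = a$, and the single axiom $Z(1) = \&(\underbrace{A, \ldots, A}_{k})$. Each copy of the axiom contributes one vertex carrying $k$ pending $A$-half-edges; every bonding step picks two $A$-edges incident to (possibly equal) vertices $u$ and $v$ and produces a single terminal directed $a$-edge with attachment $uv$ (a self-loop if $u = v$). Given an arbitrary connected $k$-regular directed graph $G$ on $m$ vertices, start from $m \cdot Z(1)$, identify the $m$ axiom vertices with those of $G$, and for each directed edge $(u, v) \in E_G$ bond an unused $A$-edge at $u$ with an unused one at $v$ in that order; $k$-regularity of $G$ guarantees that all $mk$ half-edges are paired, producing a terminal hypergraph isomorphic to $G$. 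Conversely, any $H \in L(BG_1)$ arises from some $m \cdot Z(1)$ in which every $A$-edge has been bonded (otherwise $H$ would contain a nonterminal label), so every vertex carries exactly $k$ incident $a$-edges.

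For the second item, keep $N$, $T$, $\otimes$ as above but take $Z = (Z(0), Z(1), \ldots, Z(k))$ with $Z(i) = \&(\underbrace{A, \ldots, A}_{i})$. Given a connected directed graph $G$ of maximum degree $\le k$, choose for every vertex $v$ of $G$ a copy of $Z(\deg_G(v))$ and pair $A$-edges as in the first item; conversely, each vertex of any terminal output originates from some $Z(i)$ with $i \le k$, hence its degree is at most $k$.

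The main subtlety, and the primary obstacle, is enforcing exact connectedness of the generated hypergraph. As defined, the above construction also admits disconnected outputs, because several axiom copies may have all their $A$-edges paired purely internally, yielding disjoint components. This can be resolved either by reading the language up to connected components (a convention standard for fusion-type graph grammars), or by augmenting each axiom with an extra nonterminal ``spanning'' half-edge whose bond rule forces cross-copy pairings, so that any terminal output is forced to be connected. Routine parity bookkeeping---$mk$ must be even when $k$ is odd, and connected $k$-regular graphs exist on $m$ vertices for all sufficiently large $m$---then completes the argument.
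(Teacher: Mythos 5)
Your construction is essentially the paper's, and it is correct: the paper likewise uses single-vertex axioms carrying pending type-$1$ nonterminal hyperedges that are bonded pairwise into terminal type-$2$ edges, with the proof resting on the same observation that bonding preserves vertex degrees. The only real difference is cosmetic: the paper uses two labels $I,O$ with $O\otimes I = b$ and takes, for the $k$-regular case, all axioms $\&(I^j,O^{k-j})$ for $0\le j\le k$ (so that every split of degree $k$ into in- and out-degree is available), whereas you use a single self-complementary label with $A\otimes A=a$ and choose the orientation of each edge at bonding time. Both are legitimate under the definition of $\otimes$ as a partial injective function on $N\times N$; yours needs one axiom per degree instead of $k+1$, the paper's makes in- and out-degrees explicit (which the proposition, stated for total degree, does not require). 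Two remarks on your closing paragraph. First, the connectedness issue you flag as ``the primary obstacle'' applies verbatim to the paper's own construction and is not treated there as an obstacle at all: the paper implicitly reads the claim at the level of connected generated hypergraphs (consistent with the subsequent proposition that every connected component of a generated hypergraph is itself generated), so your first resolution is the intended one and no further work is needed. Second, your alternative fix via a ``spanning'' half-edge does not work as described: bonding only ever joins two hyperedges, so such half-edges can be paired within disjoint pairs of axiom copies without forcing global connectivity, and the resulting terminal edges would pollute the output graph; fortunately it is not needed. The final parity remark is likewise irrelevant --- one only has to show the two inclusions between the language and the set of connected $k$-regular graphs, not to characterise for which $m$ such graphs exist.
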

\begin{proof}
	Let $N = \{I,O\}$ where $I,O$ are labels of type $1$ and let $T = \{b\}$. Let $O \otimes I = b$. We define the hypergraph $Z^{(i)}_j$ for $0 \le j \le i$ as $\&(I^j, O^{i-j})$. Here $I^j$ means $I,\dotsc,I$ repeated $j$ times. Then, to generate $k$-regular graphs, take the bonding grammar $BG^{(k)}_{reg} = \left(Z^{(k)}_{reg},N,T,\otimes\right)$ where $Z^{(k)}_{reg} = \left(Z^{(k)}_0,\dotsc, Z^{(k)}_k\right)$. Indeed, if $m \cdot Z^{(k)}_{reg} \Rightarrow_\bn^\ast H$ for a connected hypergraph $H$, then, before bonding, there are exactly $k$ hyperedges attached to each vertex (labeled by either $O$ or $I$). After bonding, this results in each vertex having the degree exactly $k$; more precisely, if there are $j$ $I$-labeled hyperedges and $(k-j)$ $O$-labeled hyperedges attached to a vertex $v$ before bonding, then, after bonding, the in-degree of $v$ equals $j$ and the out-degree of $v$ equals $(k-j)$.
	
	To generate graphs of maximum degree at most $k$, take the bonding grammar $BG^{(k)}_{deg} = \left(Z^{(k)}_{deg},N,T,\otimes\right)$ where the tuple $Z^{(k)}_{deg}$ is the concatenation of the tuples $Z^{(0)}_{reg}, \dotsc, Z^{(k)}_{reg}$.
	\qed
\end{proof}
\begin{example}
	Below, the hypergraphs $Z^{(i)}_j$ are presented for $0 \le j \le i \le 2$.
	$$
	Z^{(0)}_0 = 
	\vcenter{\hbox{{\tikz[baseline=.1ex]{
					\node[vertex] (VO) at (0,0) {};
	}}}}
	\qquad
	Z^{(1)}_0 = 
	\vcenter{\hbox{{\tikz[baseline=.1ex]{
					\node[vertex] (VO) at (0,0) {};
					\node[hyperedge] (EE) at (0.6,0) {$O$};
					\draw[-] (EE) -- (VO);
	}}}}
	\qquad
	Z^{(1)}_1 = 
	\vcenter{\hbox{{\tikz[baseline=.1ex]{
					\node[vertex] (VO) at (0,0) {};
					\node[hyperedge] (EE) at (0.6,0) {$I$};
					\draw[-] (EE) -- (VO);
	}}}}
	$$
	$$
	Z^{(2)}_0 = 
	\vcenter{\hbox{{\tikz[baseline=.1ex]{
					\node[vertex] (VO) at (0,0) {};
					\node[hyperedge] (EW) at (-0.6,0) {$O$};
					\node[hyperedge] (EE) at (0.6,0) {$O$};
					\draw[-] (EW) -- (VO);
					\draw[-] (EE) -- (VO);
	}}}}
	\qquad
	Z^{(2)}_1 = 
	\vcenter{\hbox{{\tikz[baseline=.1ex]{
					\node[vertex] (VO) at (0,0) {};
					\node[hyperedge] (EW) at (-0.6,0) {$I$};
					\node[hyperedge] (EE) at (0.6,0) {$O$};
					\draw[-] (EW) -- (VO);
					\draw[-] (EE) -- (VO);
	}}}}
	\qquad
	Z^{(2)}_2 = 
	\vcenter{\hbox{{\tikz[baseline=.1ex]{
					\node[vertex] (VO) at (0,0) {};
					\node[hyperedge] (EW) at (-0.6,0) {$I$};
					\node[hyperedge] (EE) at (0.6,0) {$I$};
					\draw[-] (EW) -- (VO);
					\draw[-] (EE) -- (VO);
	}}}}
	$$
	To generate the hypergraph ${\tikz[baseline=.1ex]{
					\node[vertex] (V1) at (0,0) {};
					\node[vertex] (V2) at (0.8,0) {};
					\node[vertex] (V3) at (1.6,0) {};
					\draw[-latex, thick] (V1) -- node[above] {$b$} (V2);
					\draw[-latex, thick] (V2) -- node[above] {$b$} (V3);
	}}$ using $BG^{(2)}_{deg}$, take the hypergraphs $Z^{(1)}_0$, $Z^{(1)}_1$, and $Z^{(2)}_1$ and apply bonding as follows:
	$$
	\vcenter{\hbox{{\tikz[baseline=.1ex]{
					\node[vertex] (VO) at (0,0) {};
					\node[hyperedge] (EE) at (0.6,0) {$O$};
					\draw[-] (EE) -- (VO);
	}}}}
	\;\;
	\vcenter{\hbox{{\tikz[baseline=.1ex]{
					\node[vertex] (VO) at (0,0) {};
					\node[hyperedge] (EW) at (-0.6,0) {$I$};
					\node[hyperedge] (EE) at (0.6,0) {$O$};
					\draw[-] (EW) -- (VO);
					\draw[-] (EE) -- (VO);
	}}}}
	\;\;
	\vcenter{\hbox{{\tikz[baseline=.1ex]{
					\node[vertex] (VO) at (0,0) {};
					\node[hyperedge] (EE) at (-0.6,0) {$I$};
					\draw[-] (EE) -- (VO);
	}}}}
	\;\;\Rightarrow_\bn\;\;
	\vcenter{\hbox{{\tikz[baseline=.1ex]{
			\node[vertex] (V1) at (0,0) {};
			\node[vertex] (V2) at (1,0) {};
			\node[vertex] (V3) at (3,0) {};
			\node[hyperedge] (E1) at (1.6,0) {$O$};
			\node[hyperedge] (E2) at (2.4,0) {$I$};
			\draw[-latex, thick] (V1) -- node[above] {$b$} (V2);
			\draw[-] (E1) -- (V2);
			\draw[-] (E2) -- (V3);
	}}}}
	\;\;\Rightarrow_\bn\;\;
	\vcenter{\hbox{{\tikz[baseline=.1ex]{
					\node[vertex] (V1) at (0,0) {};
					\node[vertex] (V2) at (0.8,0) {};
					\node[vertex] (V3) at (1.6,0) {};
					\draw[-latex, thick] (V1) -- node[above] {$b$} (V2);
					\draw[-latex, thick] (V2) -- node[above] {$b$} (V3);
	}}}}
	$$
\end{example}

The language of pseudotori from Example \ref{example_pseudotori} as well as the language of connected $k$-regular graphs and the language of connected graphs of maximum degree $k$ (for $k > 3$) have unbounded treewidth. Indeed, all of them contain toroidal graphs, and it is proved in \cite{KiyomiOO16} that the toroidal graph obtained from the $n\times n$ grid by wrapping around edges between the topmost row and the bottom-most row and between the leftmost column and the rightmost column has treewidth $\ge 2n-2$. As noted in \cite{KreowskiKL17}, it is proved in \cite[Proposition 4.7]{CourcelleE12} that any language generated by a hyperedge replacement grammar has bounded treewidth, and thus none of the abovementioned languages is generated by a hyperedge replacement grammar.

\subsection{Properties of Bonding Grammars}\label{ssec_properties}
Let us discuss basic mathematical properties of bonding and of bonding grammars. We start with the following simple observations that will be used later:
\begin{enumerate}
	\item If $H \Rightarrow_\bn^\ast H^\prime$, then $H$ and $H^\prime$ have the same set of vertices.
	\item If $H \Rightarrow_\bn^\ast H^\prime$ and there is a path in $H$ between two vertices $v_1,v_2 \in V_H$, then there also is a path between them in $H^\prime$.
\end{enumerate}
The reader might have noticed that, in Definition \ref{def_BG_generates_hypergraph}, the notion of a bonding grammar generating a hypergraph $H$ was defined as the fact that $m \cdot Z \Rightarrow_{BG}^\ast H$ for some $m$, which is not the same as the corresponding definition for fusion grammars. To recall, a fusion grammar generates a hypergraph $H$ if $m \cdot Z \Rightarrow_{\fs}^\ast G$, and $H$ is a connected component of $G$. It turns out that, for bonding grammars, the two definitions are equivalent, which is thanks to the abovementioned properties of bonding.
\begin{proposition}
	Let $BG = (Z,N,T,\otimes)$ be a bonding grammar. If $m \cdot Z \Rightarrow_{BG}^\ast G$ and $H$ is a connected component of $G$, then $m^\prime \cdot Z \Rightarrow_{BG}^\ast H$ for some $m^\prime$.
\end{proposition}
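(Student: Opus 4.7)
The plan is to take the given derivation $m \cdot Z \Rightarrow_{BG}^\ast G$ and extract from it a sub-derivation consisting of only those bonding steps that contribute to the connected component $H$. First, I would use observation~1 to identify the vertices of $G$ with those of $m \cdot Z$, so that each vertex of $G$ belongs to a unique copy of some $Z(i)$ in the initial disjoint union. Since each $Z(i)$ is connected and bonding preserves paths by observation~2, all vertices of any single such copy end up in the same connected component of $G$; consequently, for every copy either all of its vertices lie in $V_H$, or none of them do. I would then set $m^\prime(i)$ equal to the number of copies of $Z(i)$ whose vertices lie entirely in $V_H$, so that $m^\prime \cdot Z$ is isomorphic to the sub-hypergraph of $m \cdot Z$ on the vertex set $V_H$.

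Next, I would establish the following invariant along the derivation $m \cdot Z = G_0 \Rightarrow_{BG} \dotsb \Rightarrow_{BG} G_n = G$: for every hyperedge $e$ occurring in any $G_i$, the attachment vertices of $e$ are either all contained in $V_H$ or all disjoint from $V_H$. This should follow because the attachment vertices of a single hyperedge automatically lie in one connected component of the current hypergraph (the hyperedge itself supplies the required paths), and because bondings only merge components and never split them (by observations~1 and~2), this common component is a subset of a single component of $G$, which is either $H$ or disjoint from $H$.

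Finally, I would call a bonding step $G_i \Rightarrow_{BG} G_{i+1}$ \emph{$H$-relevant} when its two participating hyperedges have all attachment vertices in $V_H$, and denote by $G_i|_H$ the sub-hypergraph on vertex set $V_H$ whose hyperedges are precisely those of $G_i$ having all attachments in $V_H$. Using the invariant, a non-$H$-relevant step leaves $G_i|_H$ unchanged, whereas an $H$-relevant step is a legitimate bonding inside $G_i|_H$ transforming it into $G_{i+1}|_H$. Since $G_0|_H = m^\prime \cdot Z$ and $G_n|_H = H$, restricting to the $H$-relevant steps yields the desired derivation $m^\prime \cdot Z \Rightarrow_{BG}^\ast H$. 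The main obstacle is proving the invariant cleanly---in particular, checking that a hyperedge produced by a bonding inherits its ``side'' from both parent hyperedges, which relies on the fact that bonding simply concatenates the attachment sequences of the two parents and hence cannot straddle the boundary between $V_H$ and its complement.
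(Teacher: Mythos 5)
Your proposal is correct and follows essentially the same route as the paper: identify $V_G$ with $V_{m\cdot Z}$, use preservation of paths to show each copy of some $Z(i)$ lies entirely inside or entirely outside $V_H$, and take $m^\prime$ to count the copies inside. You additionally spell out the invariant needed to restrict the derivation step by step to the induced subhypergraph on $V_H$, a detail the paper's proof leaves implicit; this is a faithful completion rather than a different argument.
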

\begin{proof}
	Let $Y = m \cdot Z$. Bonding does not change the set of vertices in a hypergraph, so $V_Y = V_G \supseteq V_H$. Consider the subhypergraph $Y^\prime$ of $Y$ induced by the vertices from $V_H$. We claim that $Y^\prime = m^\prime \cdot Z$. Indeed, let $Y$ consist of connected components $Y_1,\dotsc,Y_n$ (each of them equals $Z(i)$ for some $i$). If $v_1 \in V_{Y_i} \cap V_H$ and $v_2 \in V_{Y_i}$, then $v_2$ must also belong to $V_H$, because, since $v_1$ and $v_2$ are connected by a path in $Y$, they are also connected by a path in $G$, so they belong to the same connected component of $G$, i.e. to $H$. Thus, $Y^\prime$ is the disjoint union of some connected components $Y_{i_1},\dotsc,Y_{i_l}$ as desired.
	\qed
\end{proof}
A nice property of bonding is that, unlike fusion, it is reversible. We call the converse operation \emph{breaking a bond}.
\begin{definition}
	Let $H^\prime$ be a hypergraph with a hyperedge $e^\prime$ such that $lab_{H^\prime}(e^\prime) = A_1 \otimes A_2$. Let $\type(A_i) = t_i$ for $i=1,2$. Let $H$ be obtained from $H^\prime$ by removing $e^\prime$ and adding new hyperedges $e_1$, $e_2$ such that $att_H(e_1)(i) = att_{H^\prime}(e^\prime)(i)$ for $i=1,\dotsc,t_1$ and $att_H(e_2)(i) = att_{H^\prime}(e^\prime)(t_1+i)$ for $i=1,\dotsc,t_2$. Let $lab_H(e_i) = A_i$ for $i=1,2$. Then we say that $H$ is obtained from $H^\prime$ by \emph{breaking the bond} $e^\prime$.
\end{definition}
Clearly, $H$ is obtained from $H^\prime$ by breaking a bond if and only if $H^\prime$ is obtained from $H$ by bonding. Note that, for any $e^\prime \in E_{H^\prime}$, there is at most one pair of labels $A_1$ and $A_2$ such that $A_1 \otimes A_2 = lab_{H^\prime}(e^\prime)$ due to injectivity of $\otimes$. 
\begin{definition}\label{def_bond_set}
	Given a bonding grammar $BG = (Z,N,T,\otimes)$ and a hypergraph $H \in \mathcal{H}(N \cup T)$, a \emph{bond set for $H$} is a set $\Bond \subseteq E_H$ such that the hypergraph obtained from $H$ by breaking the bond $e$ for each $e \in \Bond$ is of the form $m \cdot Z$.
\end{definition}
Clearly, $H$ is generated by $BG$ if and only if there is a bond set for $H$.
\begin{remark}\label{remark_NP}
	The following non-deterministic polynomial-time algorithm enables one to check if a hypergraph $H$ can be generated by a bonding grammar $BG$:
	\begin{enumerate}
		\item Non-deterministically choose a set $\Bond \subseteq E_H$;
		\item Break the bond $e$ for each $e \in \Bond$ if possible (this is done deterministically);
		\item Check if each connected component of the resulting hypergraph is isomorphic to $Z(j)$ for some $j$. If so, then $H$ is generated by $BG$, otherwise it is not.
	\end{enumerate}
	Therefore, the membership problem for bonding grammars lies in NP. Contrast this with the membership problem for fusion grammars, which is decidable but proving this is rather hard; the known deciding algorithm is in NEXPTIME \cite{Pshenitsyn23}. In Section \ref{ssec_bonding_membership}, we shall prove that bonding grammars are able to generate NP-complete sets of hypergraphs, so the membership problem for bonding grammars is intrinsically NP-complete. For graph grammars, the NP complexity level is acceptable because many problems from the graph theory are NP-complete. In particular, hyperedge replacement grammars, also known as hypergraph context-free grammars, are also able to generate an NP-complete graph language \cite{Rozenberg97}.
\end{remark}

Example \ref{example_pseudotori} and Proposition \ref{prop_regular} show that bonding grammars can generate languages of unbounded treewidth, which hyperedge replacement grammars cannot generate. Conversely, hyperedge replacement grammars can generate languages that bonding grammars cannot generate. To show this, let us define the notions of degree and degree set.
\begin{definition}
	For a hypergraph $H$, the degree of a vertex $v \in V_H$ is the cardinality of the set $\{(e,i) \mid e \in E_H, i \in [\type(e)],\mbox{ and } att_H(e)(i) = v\}$. The \emph{degree set} of $H$ is the set of degrees of its vertices.
\end{definition}
Note that, for example, the degree of the only vertex in the hypergraph $\vcenter{\hbox{{\tikz[baseline=.1ex]{
				\node[vertex] (VO) at (0,0) {};
				\draw[latex-, thick] (VO) to[out=30,in=-30,looseness=20] node[right] {$a$} (VO);
}}}}$ equals $2$ and that the degree of the vertex $v_2$ from Example \ref{example_hypergraph} equals $3$.

\begin{proposition}
	Let $BG = (Z,N,T,\otimes)$ be a bonding grammar where $\vert Z \vert = k$ and $M(i)$ is the degree set of $Z(i)$ for $i=1,\dotsc,k$. If $BG$ generates a hypergraph $H$, then the degree set of $H$ is contained in $M(1) \cup \dotsc \cup M(k)$.
\end{proposition}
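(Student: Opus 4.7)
The plan is to exploit the observation that bonding preserves the degree of every vertex, and then to note that the degree set of the initial hypergraph $m \cdot Z$ is already contained in $M(1) \cup \dotsc \cup M(k)$.

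First I would verify the key invariant: if $H \Rightarrow_\bn H^\prime$, then for every vertex $v \in V_H = V_{H^\prime}$ the degree of $v$ in $H$ equals the degree of $v$ in $H^\prime$. This is just an incidence-counting argument. Suppose the bonding merges $e_1,e_2$ with $lab(e_i) = A_i$ into $e^\prime$ with $lab(e^\prime) = A_1 \otimes A_2$. By definition of the bond function, $\type(A_1 \otimes A_2) = \type(A_1) + \type(A_2)$, and $att_{H^\prime}(e^\prime) = att_H(e_1)\,att_H(e_2)$. Hence for each vertex $v$, the set of incidence pairs $(e,i)$ with $att(e)(i) = v$ loses exactly the pairs coming from $e_1$ and $e_2$ and gains exactly the same number of pairs coming from $e^\prime$, so the degree of $v$ is unchanged. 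By induction on the length of a derivation, if $H_0 \Rightarrow_\bn^\ast H$, then $H_0$ and $H$ have the same vertex set and every vertex has the same degree in both.

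Next I would compute the degree set of $m \cdot Z$. Since $m \cdot Z$ is the disjoint union of copies of the $Z(i)$'s, each of its vertices lies in some copy of some $Z(i)$ (with $m(i) > 0$) and has the same degree there as in $Z(i)$. Consequently, the degree set of $m \cdot Z$ equals $\bigcup\{M(i) \mid m(i) > 0\} \subseteq M(1) \cup \dotsc \cup M(k)$.

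Finally, since $BG$ generates $H$, there exists $m \in \Nat^k$ with $m \cdot Z \Rightarrow_{BG}^\ast H$. By the invariant from step one, $H$ has the same set of vertex degrees as $m \cdot Z$, so its degree set is contained in $M(1) \cup \dotsc \cup M(k)$. There is no real obstacle here; the only thing one has to be careful about is precisely the bookkeeping in step one, namely that the type of $A_1 \otimes A_2$ is exactly $\type(A_1) + \type(A_2)$, so that no incidence pair is silently created or lost during a bonding step.
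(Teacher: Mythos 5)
Your proof is correct and follows exactly the argument the paper uses: it observes that bonding preserves vertex degrees (since $att_{H^\prime}(e^\prime) = att_H(e_1)\,att_H(e_2)$ and $\type(A_1 \otimes A_2) = \type(A_1) + \type(A_2)$), and that the degree set of $m \cdot Z$ is already contained in $M(1) \cup \dotsc \cup M(k)$. The paper states this as a ``trivial observation''; you have simply spelled out the same incidence-counting bookkeeping in full.
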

This proposition follows from the trivial observation that bonding does not change degrees of vertices. Consequently, for any bonding grammar $BG$, there is $K \in \Nat$ such that the maximum degree of each hypergraph in the language $L(BG)$ is bounded by $K$. In what follows, bonding grammars cannot generate e.g. the language of star graphs, which can be generated by a hyperedge replacement grammar \cite{Engelfriet97}. Therefore, we have proved
\begin{theorem}
	The class of languages generated by hyperedge replacement grammars neither contains nor is contained in the class of languages generated by bonding grammars.
\end{theorem}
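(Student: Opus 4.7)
The proof is essentially already assembled from the preceding paragraphs; it only remains to package the two directions and cite the supporting facts explicitly.

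For the direction that bonding grammars can exceed hyperedge replacement grammars, I would take any of the languages already exhibited in the text that have unbounded treewidth: for concreteness, the language $L(\BGpt)$ of pseudotori from Example \ref{example_pseudotori}, or alternatively the language of connected $k$-regular directed graphs from Proposition \ref{prop_regular} for any $k \ge 3$. As noted just above the theorem, both contain the family of toroidal grids, and by \cite{KiyomiOO16} the $n \times n$ toroidal grid has treewidth at least $2n-2$, so these languages have unbounded treewidth. Since by \cite[Proposition 4.7]{CourcelleE12} every language generated by a hyperedge replacement grammar has bounded treewidth, such a language cannot be generated by any hyperedge replacement grammar, while it is by construction generated by a bonding grammar.

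For the converse direction, I would use the degree bound observation that immediately precedes the theorem. Fix the language $\mathit{Star}$ of star graphs, i.e.\ graphs consisting of one center vertex adjacent to $n$ leaves for arbitrary $n \in \Nat$. This language is generated by a hyperedge replacement grammar (see \cite{Engelfriet97}), which gives the positive side. For the impossibility side, suppose for contradiction that some bonding grammar $BG = (Z,N,T,\otimes)$ with $|Z| = k$ generates $\mathit{Star}$. Let $K = \max\{d \mid d \in M(1) \cup \dotsb \cup M(k)\}$, where $M(i)$ is the degree set of $Z(i)$; this maximum exists because each $Z(i)$ is a finite hypergraph. By the preceding proposition, every hypergraph $H \in L(BG)$ has all vertex degrees bounded by $K$. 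However, the star with $K+1$ leaves belongs to $\mathit{Star}$ and has a vertex of degree $K+1 > K$, a contradiction.

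Neither step presents a real obstacle: the treewidth-vs-HR direction reuses the exact argument given just above the theorem statement, and the degree-bound direction is a direct application of the proposition immediately preceding it, combined with the known fact that stars are an HR language. The only minor care needed is to make sure a \emph{single} language witnesses the impossibility in each direction, rather than a class — both $L(\BGpt)$ and $\mathit{Star}$ serve as such single-language witnesses. Combining the two directions yields the theorem.
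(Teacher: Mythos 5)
Your proposal is correct and follows exactly the paper's own argument: the unbounded-treewidth languages (pseudotori or $k$-regular graphs) witness non-containment in one direction via \cite{KiyomiOO16} and \cite[Proposition 4.7]{CourcelleE12}, and the star-graph language together with the degree-set proposition witnesses the other. The only addition you make is spelling out the bound $K$ explicitly, which the paper leaves implicit.
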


\subsection{Bonding Grammars and Sticker Systems}\label{ssec_bonding_sticker}
Bonding grammars are designed to model sticking of DNA molecules. Nicely, it turns out that they are able to simulate regular sticker systems. Below, we present an embedding of the latter in the former (compare it with Example \ref{example_bonding_DNA}).

Let $S = (\Sigma,A,D)$ be a regular sticker system. We define the bonding grammar $\BG(S) = (Z_{S},N,T,\otimes)$ as follows. The set $N$ equals $\Sigma \cup \overline{\Sigma} \cup \{\alpha,\beta\}$ where $\overline{\Sigma} = \{\overline{a} \mid a \in \Sigma\}$ consists of new labels of the form $\overline{a}$, and $\alpha,\beta$ are also new labels. The set $T$ equals $\widetilde{\Sigma} \cup \{\varphi\}$ where $\widetilde{\Sigma} = \{\widetilde{a} \mid a \in \Sigma\}$ also consists of new labels corresponding to those from $\Sigma$. Let $\type(a) = \type(\overline{a}) = 2$ and $\type(\widetilde{a}) = 4$ for $a \in \Sigma$; let $\type(\alpha) = \type(\beta) = 1$; let $\type(\varphi) = 2$. The bond function is defined as follows: $a \otimes \overline{a} = \widetilde{a}$ (for $a \in \Sigma$); $\alpha \otimes \beta = \varphi$. Clearly, it is injective.

Now, our goal is to present the function $\tau_D:D_\Sigma \setminus \{\lambda\} \to \mathcal{H}(N \cup T)$ that transforms dominoes into hypergraphs such that sticking of two dominoes is modeled by bonding of corresponding hypergraphs. Defining this function formally would be extremely tedious, so instead we start with illustrative examples.
\begin{equation*}
\tau_D\left(
\begin{bmatrix}
	abc \\ abc \\
\end{bmatrix}
\right)
=
\vcenter{\hbox{{\tikz[baseline=.1ex]{
				\node[vertex] (U1) at (0,0.8) {};
				\node[vertex] (U2) at ($(U1)+(1.4,0)$) {};
				\node[vertex] (U3) at ($(U2)+(0.6,0)$) {};
				\node[vertex] (U4) at ($(U3)+(1.4,0)$) {};
				\node[vertex] (U5) at ($(U4)+(0.6,0)$) {};
				\node[vertex] (U6) at ($(U5)+(1.4,0)$) {};
				\foreach \i in {1,...,6}
				{
					\node[vertex] (L\i) at ($(U\i)-(0,0.7)$) {};
				}
				\node[hyperedge] (NW) at ($(U1)+(-0.6,0)$) {$\beta$};
				\draw[-] (NW) -- (U1);
				\node[hyperedge] (SW) at ($(L1)+(-0.6,0)$) {$\alpha$};
				\draw[-] (SW) -- (L1);
				\node[hyperedge] (NE) at ($(U6)+(0.6,0)$) {$\alpha$};
				\draw[-] (NE) -- (U6);
				\node[hyperedge] (SE) at ($(L6)+(0.6,0)$) {$\beta$};
				\draw[-] (SE) -- (L6);
				\draw[-latex, thick] (U2) -- node[above] {$\varphi$} (U3);
				\draw[-latex, thick] (U4) -- node[above] {$\varphi$} (U5);
				\draw[latex-, thick] (L2) -- node[above] {$\varphi$} (L3);
				\draw[latex-, thick] (L4) -- node[above] {$\varphi$} (L5);
				\node[hyperedge] (E1) at ($(U1)+(0.7,-0.35)$) {$\widetilde{a}$};
				\node[hyperedge] (E2) at ($(U3)+(0.7,-0.35)$) {$\widetilde{b}$};
				\node[hyperedge] (E3) at ($(U5)+(0.7,-0.35)$) {$\widetilde{c}$};
				\draw[-] (E1) -- node[above] {\scriptsize 1} (U1);
				\draw[-] (E1) -- node[above] {\scriptsize 2} (U2);
				\draw[-] (E1) -- node[above] {\scriptsize 4} (L1);
				\draw[-] (E1) -- node[above] {\scriptsize 3} (L2);
				\draw[-] (E2) -- node[above] {\scriptsize 1} (U3);
				\draw[-] (E2) -- node[above] {\scriptsize 2} (U4);
				\draw[-] (E2) -- node[above] {\scriptsize 4} (L3);
				\draw[-] (E2) -- node[above] {\scriptsize 3} (L4);
				\draw[-] (E3) -- node[above] {\scriptsize 1} (U5);
				\draw[-] (E3) -- node[above] {\scriptsize 2} (U6);
				\draw[-] (E3) -- node[above] {\scriptsize 4} (L5);
				\draw[-] (E3) -- node[above] {\scriptsize 3} (L6);
}}}}
\end{equation*}
$$
\tau_D\left(
	\begin{pmatrix}
		ab \\ \lambda \\
	\end{pmatrix}
	\!
	\begin{bmatrix}
		cd \\ cd \\
	\end{bmatrix}
	\!
	\begin{pmatrix}
		\lambda \\ e \\
	\end{pmatrix}	
\right)
=
\vcenter{\hbox{{\tikz[baseline=.1ex]{
				\foreach \i in {1,...,5}
				{
					\node[vertex] (U\i) at (\i*0.6,0.8) {};
				}
				\node[vertex] (U6) at ($(U5)+(1.25,0)$) {};
				\node[vertex] (U7) at ($(U6)+(0.6,0)$) {};
				\node[vertex] (U8) at ($(U7)+(1.25,0)$) {};
				\node[vertex] (L1) at ($(U5)-(0,0.7)$) {};
				\node[vertex] (L2) at ($(U6)-(0,0.7)$) {};
				\node[vertex] (L3) at ($(U7)-(0,0.7)$) {};
				\node[vertex] (L4) at ($(U8)-(0,0.7)$) {};
				\foreach \i in {5,6}
				{
					\node[vertex] (L\i) at ($(L4)+({0.6*(\i-4)},0)$) {};
				}
				\node[hyperedge] (E1) at ($(U5)+(0.625,-0.35)$) {$\widetilde{c}$};
				\node[hyperedge] (E2) at ($(U7)+(0.625,-0.35)$) {$\widetilde{d}$};
				\node[hyperedge] (NW) at ($(U1)+(-0.6,0)$) {$\beta$};
				\draw[-] (NW) -- (U1);
				\node[hyperedge] (SW) at ($(L1)+(-0.6,0)$) {$\alpha$};
				\draw[-] (SW) -- (L1);
				\node[hyperedge] (NE) at ($(U8)+(0.6,0)$) {$\alpha$};
				\draw[-] (NE) -- (U8);
				\node[hyperedge] (SE) at ($(L6)+(0.6,0)$) {$\beta$};
				\draw[-] (SE) -- (L6);
				\draw[-latex, thick] (U1) -- node[above] {$a$} (U2);
				\draw[-latex, thick] (U2) -- node[above] {$\varphi$} (U3);
				\draw[-latex, thick] (U3) -- node[above] {$b$} (U4);
				\draw[-latex, thick] (U4) -- node[above] {$\varphi$} (U5);
				\draw[-latex, thick] (U6) -- node[above] {$\varphi$} (U7);
				\draw[latex-, thick] (L2) -- node[above] {$\varphi$} (L3);
				\draw[latex-, thick] (L4) -- node[above] {$\varphi$} (L5);
				\draw[latex-, thick] (L5) -- node[above] {$\overline{e}$} (L6);
				\draw[-] (E1) -- node[above] {\scriptsize 1} (U5);
				\draw[-] (E1) -- node[above] {\scriptsize 2} (U6);
				\draw[-] (E1) -- node[above] {\scriptsize 4} (L1);
				\draw[-] (E1) -- node[above] {\scriptsize 3} (L2);
				\draw[-] (E2) -- node[above] {\scriptsize 1} (U7);
				\draw[-] (E2) -- node[above] {\scriptsize 2} (U8);
				\draw[-] (E2) -- node[above] {\scriptsize 4} (L3);
				\draw[-] (E2) -- node[above] {\scriptsize 3} (L4);
}}}}
$$
$$
\tau_D
\begin{pmatrix}
	ab \\ \lambda \\
\end{pmatrix}
=
\vcenter{\hbox{{\tikz[baseline=.1ex]{
				\foreach \i in {1,...,4}
				{
					\node[vertex] (U\i) at (\i*0.6,0.5) {};
				}
				\node[hyperedge] (NW) at ($(U1)+(-0.6,0)$) {$\beta$};
				\draw[-] (NW) -- (U1);
				\node[hyperedge] (NE) at ($(U4)+(0.6,0)$) {$\alpha$};
				\draw[-] (NE) -- (U4);
				\draw[-latex, thick] (U1) -- node[above] {$a$} (U2);
				\draw[-latex, thick] (U2) -- node[above] {$\varphi$} (U3);
				\draw[-latex, thick] (U3) -- node[above] {$b$} (U4);
}}}}
\qquad\qquad
\tau_D\begin{pmatrix}
	\lambda \\ ab \\
\end{pmatrix}
=
\vcenter{\hbox{{\tikz[baseline=.1ex]{
				\foreach \i in {1,...,4}
				{
					\node[vertex] (U\i) at (\i*0.6,0.5) {};
				}
				\node[hyperedge] (NW) at ($(U1)+(-0.6,0)$) {$\alpha$};
				\draw[-] (NW) -- (U1);
				\node[hyperedge] (NE) at ($(U4)+(0.6,0)$) {$\beta$};
				\draw[-] (NE) -- (U4);
				\draw[latex-, thick] (U1) -- node[above] {$\overline{a}$} (U2);
				\draw[latex-, thick] (U2) -- node[above] {$\varphi$} (U3);
				\draw[latex-, thick] (U3) -- node[above] {$\overline{b}$} (U4);
}}}}
$$
Generally, if 
$d = \begin{bmatrix}
	a_1 \dotsc a_n \\ a_1 \dotsc a_n \\
\end{bmatrix}$,
then $\tau_D(d)$ has a sequence of hyperedges of type 4 with labels $\widetilde{a}_1,\dotsc,\widetilde{a}_n$ consecutively connected by $\varphi$-labeled edges; there also are two $\alpha$-labeled and two $\beta$-labeled hyperedges at the corners of this hypergraph. If a domino $d = d_1d_2d_3$ with $d_2 \in D^=_\Sigma$ has the sticky end $d_1 = \begin{pmatrix}
	b_1 \dotsc b_n \\ \lambda \\
\end{pmatrix}$, then the sequence of edges with the labels $b_1,\dotsc,b_n$ consecutively connected by $\varphi$-labeled edges replaces the northwestern $\beta$-labeled hyperedge of $\tau_D(d_2)$. Other kinds of sticky ends are treated similarly.

Each part of this translation is biologically grounded. Edges labeled by symbols from $\Sigma$ and $\overline{\Sigma}$ model nucleobases; hyperedges with the labels $\alpha$ and $\beta$ model $3^\prime$ and $5^\prime$ ends of a DNA strand; hyperedges with labels from $\widetilde{\Sigma}$ model base pairs; $\varphi$-labeled edges represent phosphodiester bonds between nucleobases.

Let us also define the function $\tau_R: R_\Sigma \to \mathcal{H}(N \cup T)$. For $d \in R_\Sigma$, $\tau_R(d)$ is the hypergraph obtained from $\tau_D(d)$ by removing the northwestern $\beta$-labeled hyperedge and the southwestern $\alpha$-labeled one. Now, we are ready to define $Z_{S}$: 
\begin{definition}
	For a regular sticker system $S = (\Sigma,A,D)$, $Z_{S}$ is the tuple composed of the hypergraphs from the set $\{\tau_R(a) \mid a \in A\} \cup \{\tau_D(d) \mid (\lambda,d) \in D\}$.
\end{definition}
This completes the definition of $\BG(S)$.

\begin{theorem}
	A sticker system $S = (\Sigma,A,D)$ generates a domino $r \in R_\Sigma$ if and only if the bonding grammar $\BG(S)$ generates $\tau_R(r)$.
\end{theorem}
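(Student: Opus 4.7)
The plan is to establish both directions by exhibiting an explicit correspondence between sticker derivations $a \Rightarrow_\st^\ast r$ and bonding derivations starting from $m \cdot Z_S$. The crucial feature of the encoding $\tau_D, \tau_R$ is that a \emph{single} sticking step $u \Rightarrow_\st u \cdot d$ (with $(\lambda, d) \in D$) is mirrored by a \emph{fixed number} of bondings: one $\alpha \otimes \beta = \varphi$ bonding that glues the upper ends together, one $\alpha \otimes \beta$ bonding that glues the lower ends together, and, for every symbol in the overlapping sticky region, one $a \otimes \overline{a} = \widetilde a$ bonding. Throughout, I use the fact (noted at the start of Section~\ref{ssec_properties}) that bonding preserves the vertex set and connectivity, so the structural invariants of $\tau_R(u)$ are maintained.

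For the forward direction, I proceed by induction on the length $n$ of the derivation $a \Rightarrow_\st^n r$ in $S$. The base case $n=0$ gives $r = a \in A$, hence $\tau_R(r) \in Z_S$ and is trivially generated by $\BG(S)$ with $m$ counting one copy of $\tau_R(a)$. For the inductive step, if $a \Rightarrow_\st^\ast u \Rightarrow_\st u\cdot d = r$ via rule $(\lambda,d)$, then by hypothesis $m \cdot Z_S \Rightarrow_\bn^\ast \tau_R(u)$. Adding one fresh copy of $\tau_D(d)$ to the multiset and placing it disjointly next to the already-derived $\tau_R(u)$, I then apply the block of bondings described above, matched to the sticky end of $u$ and the sticky end of $d$. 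A direct inspection of the pictures defining $\tau_D$ and $\tau_R$ shows the resulting hypergraph is precisely $\tau_R(u\cdot d)=\tau_R(r)$.

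For the backward direction, suppose $\BG(S)$ generates $\tau_R(r)$. By the characterisation in Definition~\ref{def_bond_set}, fix a bond set $\Bond$ for $\tau_R(r)$ and let $H = \tau_R(r)$ with all bonds in $\Bond$ broken, so $H = m\cdot Z_S$. I then interpret $\Bond$ as a decomposition of $\tau_R(r)$ into ``slices'' running along the length of the double strand: every $\varphi$-labeled edge of $\tau_R(r)$ that lies in $\Bond$ must, by injectivity of $\otimes$, split into an $\alpha$ and a $\beta$ hyperedge of type $1$ on the corresponding vertex, and similarly every $\widetilde a$-edge in $\Bond$ splits into $a$ and $\overline a$ edges. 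Using the linear backbone structure of $\tau_R(r)$, I will argue that the connected components of $H$ are forced to arise as $\tau_R(a_0), \tau_D(d_1), \dots, \tau_D(d_n)$ in a unique left-to-right order, where $a_0$ is the unique component having no $\alpha/\beta$ hyperedges on its left (which must come from the axiom set $A$) and each $d_i$ comes from a rule $(\lambda,d_i) \in D$. Reading this decomposition as a sticker derivation yields $a_0 \Rightarrow_\st^\ast a_0\cdot d_1\cdots d_n = r$.

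The main obstacle is the structural analysis in the backward direction: I need to show that any bond set respects the horizontal geometry of $\tau_R(r)$, i.e.\ that each connected component of $H$ is a contiguous slice, not some exotic collection of non-adjacent columns. This follows from two observations that I will formalise. First, every vertex of $\tau_R(r)$ is incident to at most one $\varphi$-edge on its left and at most one on its right, so breaking a set of $\varphi$-edges partitions the vertices into a linear sequence of intervals. Second, the only hypergraphs in $Z_S$ that have a $\beta$ (resp.\ $\alpha$) hyperedge on their upper-left corner are those of the form $\tau_R(a)$ or $\tau_D(d)$ for axioms/rules, and the prescribed shapes of these hypergraphs force the internal pattern of intact $\widetilde a$-edges and broken ones to match exactly one domino word. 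Given these two facts, the reconstruction of the sticker derivation is mechanical.
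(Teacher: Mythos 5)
Your proposal is correct and follows essentially the same route as the paper's own proof: the forward direction by induction on the sticker derivation, simulating each sticking step by two $\alpha\otimes\beta$ bondings plus one $a\otimes\overline{a}$ bonding per overlapping symbol, and the backward direction by fixing a bond set, breaking it to obtain $m\cdot Z_S$, and arguing that the linear backbone of $\varphi$-edges and the placement of the $\alpha/\beta$ end-markers force the connected components to be a left-to-right sequence $\tau_R(a_0),\tau_D(d_1),\dotsc,\tau_D(d_n)$ reconstructing a derivation. The paper phrases the decomposition as drawing vertical and horizontal cut lines on the domino representation of $r$, which is the same ``contiguous slices'' argument you identify as the main obstacle.
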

\begin{proof}[sketch]
To prove the ``only if'' direction it suffices to notice that, if sticking $x \cdot y$ of $x \in R_\Sigma$ and $y \in D_\Sigma$ is defined, then one can apply several bondings to $\tau_R(x)$ and $\tau_D(y)$ in such a way that the result is $\tau_R(x \cdot y)$ (see Example \ref{example_sticker_to_bonding}). Then the proof is by induction on the length of a derivation in $S$. 

Conversely, suppose that $H = \tau_R(r)$ is generated by $\BG(S)$. Assume that $r =
\begin{tabular}{|cccccc}
	\cline{1-3}
	$a_1$ & $\dotsc$ & \rEnd{$a_m$} & & & \\\cline{4-6}
	$a_1$ & $\dotsc$ & $a_m$ & $b_1$ & $\dotsc$ & \rEnd{$b_n$} \\
	\cline{1-6}
\end{tabular}
$ (other cases are treated similarly). Then
$$
H = \tau_R(r)
=\;
\vcenter{\hbox{{\tikz[baseline=.1ex]{
				\node[vertex] (U1) at (0,0.8) {};
				\node[vertex] (U2) at ($(U1)+(1.4,0)$) {};
				\node[vertex] (U3) at ($(U2)+(0.6,0)$) {};
				\node[vertex] (U4) at ($(U3)+(1,0)$) {};
				\node[vertex] (U5) at ($(U4)+(0.6,0)$) {};
				\node[vertex] (U6) at ($(U5)+(1.4,0)$) {};
				\foreach \i in {1,...,6}
				{
					\node[vertex] (L\i) at ($(U\i)-(0,0.8)$) {};
				}
				\node[vertex] (L7) at ($(L6)+(0.6,0)$) {};
				\node[vertex] (L8) at ($(L7)+(0.6,0)$) {};
				\node[vertex] (L9) at ($(L8)+(0.6,0)$) {};
				\node[vertex] (L10) at ($(L9)+(1,0)$) {};
				\node[vertex] (L11) at ($(L10)+(0.6,0)$) {};
				\node[vertex] (L12) at ($(L11)+(0.6,0)$) {};
				\node[hyperedge] (NE) at ($(U6)+(0.6,0)$) {$\alpha$};
				\draw[-] (NE) -- (U6);
				\node[hyperedge] (SE) at ($(L12)+(0.6,0)$) {$\beta$};
				\draw[-] (SE) -- (L11);
				\draw[-latex, thick] (U2) -- node[above] {$\varphi$} (U3);
				\draw[-latex, thick] (U4) -- node[above] {$\varphi$} (U5);
				\draw[latex-, thick] (L2) -- node[above] {$\varphi$} (L3);
				\draw[latex-, thick] (L4) -- node[above] {$\varphi$} (L5);
				\draw[latex-, thick] (L6) -- node[above] {$\varphi$} (L7);
				\draw[latex-, thick] (L7) -- node[above] {$\overline{b}_1$} (L8);
				\draw[latex-, thick] (L8) -- node[above] {$\varphi$} (L9);
				\draw[latex-, thick] (L10) -- node[above] {$\varphi$} (L11);
				\draw[latex-, thick] (L11) -- node[above] {$\overline{b}_n$} (L12);
				\node at ($(U3)+(0.5,0)$) {\dots};
				\node at ($(L3)+(0.5,0)$) {\dots};
				\node at ($(L9)+(0.5,0)$) {\dots};
				\node[hyperedge] (E1) at ($(U1)+(0.7,-0.4)$) {$\widetilde{a}_1$};
				\node[hyperedge] (E2) at ($(U5)+(0.7,-0.4)$) {$\widetilde{a}_m$};
				\draw[-] (E1) -- node[above] {\scriptsize 1} (U1);
				\draw[-] (E1) -- node[above] {\scriptsize 2} (U2);
				\draw[-] (E1) -- node[above] {\scriptsize 4} (L1);
				\draw[-] (E1) -- node[above] {\scriptsize 3} (L2);
				\draw[-] (E2) -- node[above] {\scriptsize 1} (U5);
				\draw[-] (E2) -- node[above] {\scriptsize 2} (U6);
				\draw[-] (E2) -- node[above] {\scriptsize 4} (L5);
				\draw[-] (E2) -- node[above] {\scriptsize 3} (L6);
}}}}
$$
Let $e^1_1,\dotsc,e^1_{m-1}$ be the upper $\varphi$-labeled edges in $H$ and let $e^2_1,\dotsc,e^2_{m+n-1}$ be the lower $\varphi$-labeled edges in $H$ (from left to right). Let also $e^0_1,\dotsc,e^0_m$ be the hyperedges of type 4 such that $e^0_i$ is labeled by $\widetilde{a}_i$. Let us denote the vertex $att_H(e^0_1)(1)$ by $\mathit{nw}$ and the vertex $att_H(e^0_1)(4)$ by $\mathit{sw}$. 

Fix some bond set $\Bond$ for $H$ and do the following with the domino representation of $r$. If $e^1_i \in \Bond$, then draw a vertical line between $a_i$ and $a_{i+1}$ in the upper strand of $r$. If $e^2_i \in \Bond$, then draw a vertical line between $a_i$ and $a_{i+1}$ in the lower strand of $r$. If $e^0_i \in \Bond$, then draw a horizontal line between $a_i$ in the upper strand and $a_i$ in the lower strand of $r$. These lines divide $r$ into several pieces $r_1,\dotsc,r_t$. Now, let us break the bond $e$ for each $e \in \Bond$; let us denote the resulting hypergraph by $F$. Clearly, breaking a bond $e^j_i \in \Bond$ in $H$ corresponds to drawing a line in $r$ according to the prodecure described above. Thus, there is a one-one correspondence between the pieces $r_1,\dotsc,r_t$ and the connected components of $F$. Let $F_i$ be the connected component of $F$ corresponding to $r_i$ (for $i=1,\dotsc,t$). Since $\Bond$ is the bond set, $F_i \in Z_S$, so either $F_i = \tau_R(d)$ or $F_i = \tau_D(d)$ for some $d$. 

Note that $e^0_1 \notin \Bond$ because if this was the case, then 
$F$ would look like
$
\vcenter{\hbox{{\tikz[baseline=.1ex]{
				\node[vertex] (U1) at (0,0.8) {};
				\node[vertex] (U2) at ($(U1)+(0.7,0)$) {};
				\foreach \i in {1,2}
				{
					\node[vertex] (L\i) at ($(U\i)-(0,0.5)$) {};
				}
				\draw[-latex, thick] (U1) -- node[above] {$a_1$} (U2);
				\draw[latex-, thick] (L1) -- node[above] {$\overline{a}_1$} (L2);
				\node at ($(U2)+(0.5,0)$) {\dots};
				\node at ($(L2)+(0.5,0)$) {\dots};
}}}}
$, so some $F_i$ would contain the $a_1$-labeled edge, to which first attachment vertex no other hyperedge is attached. This, however, cannot be the case. Indeed, $F_i$ either equals $\tau_D(d)$ or $\tau_R(d)$ for some $d$, and the definitions of $\tau_D$ and $\tau_R$ imply that either a $\varphi$-labeled or a $\beta$-labeled hyperedge must be attached to the first attachment vertex of the $a_1$-labeled edge in $F_i$. Therefore, $e^0_1$ belongs to $E_F$, so it belongs to one of its connected components, say, to $F_1$. Clearly, $F_1 = \tau_R(r_1)$ because, if $F_1 = \tau_D(r_1)$, then a $\beta$-labeled hyperedge would be attached to $\mathit{nw}$.

For each $v \in V_H \setminus \{\mathit{nw}, \mathit{sw}\}$, a hyperedge with one of the labels $\alpha,\beta$ or $\varphi$ is attached to $v$ in $H$. Breaking the bonds from $\mathcal{B}$ preserves this property because, if $e \in \mathcal{B}$ and $lab_H(e) = \varphi$, then, after breaking the bond $e$, an $\alpha$-labeled hyperedge and a $\beta$-labeled one appear instead of $e$. Thus $F_i = \tau_D(r_i)$ for $i=2,\dotsc,t$; indeed, if $F_i$ equaled $\tau_R(r_i)$, then there would exist a vertex in $V_{F_i}$ to which no hyperedge with a label $\alpha,\beta$, or $\varphi$ is attached in $F$. Concluding, $r$ is obtained from the pieces $r_1,\dotsc,r_t$ by means of sticking where the leftmost domino $r_1$ is from $A$ and, for $i=2,\dotsc,t$, $(\lambda,r_i) \in D$. Thus $r$ is generated by $S$.
\qed
\end{proof}

\begin{example}\label{example_sticker_to_bonding}
	Let 
	$
	a_0 = 
	\begin{tabular}{|cc}
		\cline{1-1}
		\lrEnd{$a$} & \\\cline{2-2}
		$a$ & \rEnd{$b$} \\
		\cline{1-2}
	\end{tabular}
	$,
	$
	d_1 = 
	\begin{tabular}{ccc}
		\cline{1-2}
		\lEnd{$b$} & \rEnd{$c$} & \\\cline{1-1}\cline{3-3}
		& \lEnd{$c$} & \rEnd{$b$} \\
		\cline{2-3}
	\end{tabular}
	$,
	$
	d_2 = 
	\begin{tabular}{cc}
		\cline{1-2}
		\lEnd{$b$} & \rEnd{$d$} \\
		\cline{1-2}
		& \\
	\end{tabular}
	$, and
	$
	r = \vcenter{\hbox{
			\begin{tabular}{|ccccc}
				\cline{1-5}
				$a$ & $b$ & $c$ & $b$ & \rEnd{$d$} \\\cline{5-5}
				$a$ & $b$ & $c$ & \rEnd{$b$} &  \\
				\cline{1-4}
			\end{tabular}
	}}
	$. Let $\Sigma = \{a,b,c,d\}$, let $A = \{a_0\}$, and let $D = \{(\lambda,d_1),(\lambda,d_2)\}$. Clearly, the sticker system $S = (\Sigma,A,D)$ generates $r$. The hypergraph $\tau_R(r)$ can be obtained from $\tau_R(a_0)$, $\tau_D(d_1)$ and $\tau_D(d_2)$ by bonding as shown below.
	
	\begin{eqnarray*}
		\vcenter{\hbox{{\tikz[baseline=.1ex]{
						\node[vertex] (U1) at (0,0.8) {};
						\node[vertex] (U2) at ($(U1)+(1.2,0)$) {};
						\foreach \i in {1,2}
						{
							\node[vertex] (L\i) at ($(U\i)-(0,0.8)$) {};
						}
						\node[vertex] (L3) at ($(L2)+(0.6,0)$) {};
						\node[vertex] (L4) at ($(L3)+(0.6,0)$) {};
						\node[hyperedge] (NE) at ($(U2)+(0.6,0)$) {$\alpha$};
						\draw[-] (NE) -- (U2);
						\node[hyperedge] (SE) at ($(L4)+(0.6,0)$) {$\beta$};
						\draw[-] (SE) -- (L4);
						\draw[latex-, thick] (L2) -- node[above] {$\varphi$} (L3);
						\draw[latex-, thick] (L3) -- node[above] {$\overline{b}$} (L4);
						\node[hyperedge] (E1) at ($(U1)+(0.6,-0.4)$) {$\widetilde{a}$};
						\draw[-] (E1) -- node[above] {\scriptsize 1} (U1);
						\draw[-] (E1) -- node[above] {\scriptsize 2} (U2);
						\draw[-] (E1) -- node[above] {\scriptsize 4} (L1);
						\draw[-] (E1) -- node[above] {\scriptsize 3} (L2);
						\node[vertex] (2U1) at ($(U1)+(3.3,0)$) {};
						\node[vertex] (2U2) at ($(2U1)+(0.6,0)$) {};
						\node[vertex] (2U3) at ($(2U2)+(0.6,0)$) {};
						\node[vertex] (2U4) at ($(2U3)+(1.2,0)$) {};
						\node[vertex] (2L1) at ($(2U3)-(0,0.8)$) {};
						\node[vertex] (2L2) at ($(2U4)-(0,0.8)$) {};
						\node[vertex] (2L3) at ($(2L2)+(0.6,0)$) {};
						\node[vertex] (2L4) at ($(2L3)+(0.6,0)$) {};
						\node[hyperedge] (2NW) at ($(2U1)+(-0.6,0)$) {$\beta$};
						\draw[-] (2NW) -- (2U1);
						\node[hyperedge] (2SW) at ($(2L1)+(-0.6,0)$) {$\alpha$};
						\draw[-] (2SW) -- (2L1);
						\node[hyperedge] (2NE) at ($(2U4)+(0.6,0)$) {$\alpha$};
						\draw[-] (2NE) -- (2U4);
						\node[hyperedge] (2SE) at ($(2L4)+(0.6,0)$) {$\beta$};
						\draw[-] (2SE) -- (2L4);
						\draw[-latex, thick] (2U1) -- node[above] {$b$} (2U2);
						\draw[-latex, thick] (2U2) -- node[above] {$\varphi$} (2U3);
						\draw[latex-, thick] (2L2) -- node[above] {$\varphi$} (2L3);
						\draw[latex-, thick] (2L3) -- node[above] {$\overline{b}$} (2L4);
						\node[hyperedge] (2E1) at ($(2U3)+(0.6,-0.4)$) {$\widetilde{c}$};
						\draw[-] (2E1) -- node[above] {\scriptsize 1} (2U3);
						\draw[-] (2E1) -- node[above] {\scriptsize 2} (2U4);
						\draw[-] (2E1) -- node[above] {\scriptsize 4} (2L1);
						\draw[-] (2E1) -- node[above] {\scriptsize 3} (2L2);
						\foreach \i in {1,...,4}
						{
							\node[vertex] (3U\i) at ($(U1)+({7.2+0.6*\i},0)$) {};
						}
						\node[hyperedge] (3NW) at ($(3U1)+(-0.6,0)$) {$\beta$};
						\draw[-] (3NW) -- (3U1);
						\node[hyperedge] (3NE) at ($(3U4)+(0.6,0)$) {$\alpha$};
						\draw[-] (3NE) -- (3U4);
						\draw[-latex, thick] (3U1) -- node[above] {$b$} (3U2);
						\draw[-latex, thick] (3U2) -- node[above] {$\varphi$} (3U3);
						\draw[-latex, thick] (3U3) -- node[above] {$d$} (3U4);
		}}}}
		\;\;\Rightarrow_\bn^\ast
		\\
		\vcenter{\hbox{{\tikz[baseline=.1ex]{
						\node[vertex] (U1) at (0,0.8) {};
						\node[vertex] (U2) at ($(U1)+(1.2,0)$) {};
						\foreach \i in {1,2}
						{
							\node[vertex] (L\i) at ($(U\i)-(0,0.8)$) {};
						}
						\node[vertex] (L3) at ($(L2)+(0.6,0)$) {};
						\node[vertex] (L4) at ($(L3)+(1.2,0)$) {};
						\draw[latex-, thick] (L2) -- node[above] {$\varphi$} (L3);
						\node[hyperedge] (E1) at ($(U1)+(0.6,-0.4)$) {$\widetilde{a}$};
						\draw[-] (E1) -- node[above] {\scriptsize 1} (U1);
						\draw[-] (E1) -- node[above] {\scriptsize 2} (U2);
						\draw[-] (E1) -- node[above] {\scriptsize 4} (L1);
						\draw[-] (E1) -- node[above] {\scriptsize 3} (L2);
						\node[vertex] (2U1) at ($(U2)+(0.6,0)$) {};
						\node[vertex] (2U2) at ($(2U1)+(1.2,0)$) {};
						\node[vertex] (2U3) at ($(2U2)+(0.6,0)$) {};
						\node[vertex] (2U4) at ($(2U3)+(1.2,0)$) {};
						\node[vertex] (2L1) at ($(2U3)-(0,0.8)$) {};
						\node[vertex] (2L2) at ($(2U4)-(0,0.8)$) {};
						\node[vertex] (2L3) at ($(2L2)+(0.6,0)$) {};
						\node[vertex] (2L4) at ($(2L3)+(1.2,0)$) {};
						\node[hyperedge] (2SE) at ($(2L4)+(0.6,0)$) {$\beta$};
						\draw[-] (2SE) -- (2L4);
						\draw[-latex, thick] (2U2) -- node[above] {$\varphi$} (2U3);
						\draw[latex-, thick] (2L2) -- node[above] {$\varphi$} (2L3);
						\node[hyperedge] (2E1) at ($(2U3)+(0.6,-0.4)$) {$\widetilde{c}$};
						\draw[-] (2E1) -- node[above] {\scriptsize 1} (2U3);
						\draw[-] (2E1) -- node[above] {\scriptsize 2} (2U4);
						\draw[-] (2E1) -- node[above] {\scriptsize 4} (2L1);
						\draw[-] (2E1) -- node[above] {\scriptsize 3} (2L2);
						\node[vertex] (3U1) at ($(2U4)+(0.6,0)$) {};
						\node[vertex] (3U2) at ($(3U1)+(1.2,0)$) {};
						\node[vertex] (3U3) at ($(3U2)+(0.6,0)$) {};
						\node[vertex] (3U4) at ($(3U3)+(0.6,0)$) {};
						\node[hyperedge] (3NE) at ($(3U4)+(0.6,0)$) {$\alpha$};
						\draw[-] (3NE) -- (3U4);
						\draw[-latex, thick] (3U2) -- node[above] {$\varphi$} (3U3);
						\draw[-latex, thick] (3U3) -- node[above] {$d$} (3U4);
						\node[hyperedge] (E2) at ($(2U1)+(0.6,-0.4)$) {$\widetilde{b}$};
						\draw[-] (E2) -- node[above] {\scriptsize 1} (2U1);
						\draw[-] (E2) -- node[above] {\scriptsize 2} (2U2);
						\draw[-] (E2) -- node[above] {\scriptsize 4} (L3);
						\draw[-] (E2) -- node[above] {\scriptsize 3} (L4);
						\node[hyperedge] (3E1) at ($(3U1)+(0.6,-0.4)$) {$\widetilde{b}$};
						\draw[-] (3E1) -- node[above] {\scriptsize 1} (3U1);
						\draw[-] (3E1) -- node[above] {\scriptsize 2} (3U2);
						\draw[-] (3E1) -- node[above] {\scriptsize 4} (2L3);
						\draw[-] (3E1) -- node[above] {\scriptsize 3} (2L4);
						\draw[-latex, thick] (U2) -- node[above] {$\varphi$} (2U1);
						\draw[-latex, thick] (2U4) -- node[above] {$\varphi$} (3U1);
						\draw[latex-, thick] (L4) -- node[above] {$\varphi$} (2L1);
		}}}}
		\;\;
		\left(\mbox{cf. }\vcenter{\hbox{
				\begin{tabular}{|ccccc}
					\cline{1-5}
					\lrEnd{$a$} & $b$ & \rEnd{$c$} & $b$ & \rEnd{$d$} \\\cline{2-2}\cline{4-5}
					$a$ & \rEnd{$b$} & $c$ & \rEnd{$b$} &  \\
					\cline{1-4}
				\end{tabular}
		}}\right)
	\end{eqnarray*}

\end{example}

\subsection{Membership Problem for Bonding Grammars}\label{ssec_bonding_membership}

In Remark \ref{remark_NP}, we have already discussed that the membership problem for bonding grammars is in NP. Now, we are going to prove its NP-hardness.

\begin{theorem}\label{th_NP}
	Some bonding grammar generates an NP-complete language.
\end{theorem}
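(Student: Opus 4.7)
The plan is to exhibit a bonding grammar whose membership problem is NP-hard via a reduction from the \emph{partition into triangles} problem for graphs of maximum degree at most some fixed constant $k$, which is NP-complete by \cite{vanRooijvKNB12}. Since the membership problem for any bonding grammar already lies in NP by Remark \ref{remark_NP}, this will yield NP-completeness.

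I would take $N = \{I,O\}$ with both labels of type $1$, $T = \{b\}$ with $\type(b) = 2$, and $O \otimes I = b$, exactly as in Proposition \ref{prop_regular}. The tuple $Z$ would enumerate all \emph{decorated triangles}: for each of the finitely many choices of (i) an orientation of the three edges of a triangle on vertices $v_1,v_2,v_3$ and (ii) strings $s_1,s_2,s_3 \in \{I,O\}^\ast$ with $|s_i| \le k-2$, include in $Z$ the hypergraph obtained by adding three $b$-labeled hyperedges on $v_1,v_2,v_3$ according to the chosen orientation and attaching the type-$1$ hyperedges of $\&(s_i)$ at $v_i$. Since $k$ is fixed, $Z$ is finite.

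Given an input graph $G$ of maximum degree $\le k$, define $H_G$ to be its hypergraph representation over $T$: $V(H_G) = V(G)$, and each edge of $G$ becomes a single $b$-labeled hyperedge oriented by some fixed linear order on $V(G)$. The central claim I would establish is that $G$ admits a partition into triangles iff $H_G \in L(BG)$. For the forward direction, given such a partition $T_1,\dotsc,T_t$, for each $T_j$ I would pick the starting hypergraph whose triangle matches the three $G$-edges inside $T_j$ and whose dangling $I$- and $O$-labeled hyperedges match the edges of $G$ leaving $T_j$; bonding matching $I$/$O$ pairs across triangles then produces exactly $H_G$. For the backward direction, I would fix any bond set $\Bond$ for $H_G$; since breaking bonds preserves the vertex set, the connected components of the resulting hypergraph $m \cdot Z$ partition $V(G)$ into triples, each triple being spanned by three $b$-labeled edges coming from its starting-hypergraph copy.

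The main obstacle is the backward direction: the triangle edges of the starting hypergraphs and the edges produced by bonding both carry the label $b$, and are therefore indistinguishable in $H_G$. The argument I would give is that breaking a bond merely inverts a bonding step, so the three $b$-labeled triangle edges of each copy of some $Z(i)$ are never themselves produced by bonding and must already appear as $b$-labeled hyperedges of $H_G$; by construction of $H_G$ this means they correspond to genuine edges of $G$, so the triples obtained above form an honest triangle partition. Once this is settled, the reduction is clearly polynomial and NP-completeness of $L(BG)$ follows.
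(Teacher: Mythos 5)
Your proposal is correct and follows essentially the same route as the paper: a reduction from partition-into-triangles for graphs of bounded degree, a grammar consisting of triangles decorated with dangling type-$1$ $I$/$O$-labeled hyperedges, and the key observation that bonding preserves the vertex set and leaves terminal edges untouched, so the connected components of the axiom hypergraph force a genuine triangle partition. The only inessential differences are that the paper represents each undirected edge by a pair of oppositely directed $b$-edges (which avoids enumerating orientations and lets each decorated triangle carry equally many $I$'s and $O$'s per vertex), and that it first passes to a connected, degree-$5$ variant of the problem.
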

\begin{proof}
	In \cite{vanRooijvKNB12}, it is proved that the problem of partitioning an undirected graph of maximum degree 4 into triangles is NP-complete. The input of this problem is an undirected simple graph $G = (V,E)$ with $3q$ vertices such that the degree of each vertex is at most 4; the question is whether $V$ can be partitioned into 3-element sets $V_1$, \dots, $V_q$ such that, for each $i$, any two vertices in $V_i$ are adjacent. Let us slightly modify this problem:
	\begin{quote}
		\textbf{Problem \textsc{5Conn-PiT}}
		\\
		\textbf{Input:} an undirected simple \emph{connected} graph $(V,E)$ with $3q$ vertices such that the degree of each vertex is at most 5.
		\\
		\textbf{Question:} can $V$ be partitioned into 3-element sets $V_1$, \dots, $V_q$ such that, for each $i$, any two vertices in $V_i$ are adjacent?
	\end{quote}
	The former problem can be reduced to \textsc{5Conn-PiT} in polynomial time. Indeed, if $G = (V,E)$ consists of connected components $G_1,\dotsc,G_l$, then firstly check if each of them contains at least three vertices (otherwise, the answer to the former problem is NO). Then, for each $i = 1,\dotsc,l$, choose two vertices $v_i^1$ and $v_i^2$ in $G_i$ and add the edge $\{v_i^2,v_{i+1}^1\}$ to $G$ for each $i=1,\dotsc,l-1$. The resulting graph $G^\prime$ is connected; the degree of each its vertex is at most 5. Finally, $G^\prime$ can be partitioned into triangles if and only if so can be $G$. Indeed, if there is a partition of $V$ into 3-element sets $V_1,\dotsc,V_q$ such that any two vertices in $V_j$ are adjacent in $G^\prime$, then it is not the case that $V_j = \{v_i^2,v_{i+1}^1,v\}$ for some $i$ and $v \in V$, because this would imply that there is a path from $v_i^2$ to $v_{i+1}^1$ in $G$ going through $v$; however, $v_i^2$ and $v_{i+1}^1$ are from different connected components of $G$. 
	
	Let us now agree on how to represent an undirected simple graph as a hypergraph. Let us fix a terminal label $b$; then, a graph $G = (V,E)$ is represented by the hypergraph $\ulcorner G \urcorner = (V,E\times\{1,2\},att,lab)$ where $att(e,1) = v_1v_2$ and $att(e,2) = v_2v_1$ for $e = \{v_1,v_2\} \in E$ and where $lab(e,1) = lab(e,2) = b$. In other words, each undirected edge is represented by the following pair of hyperedges:
	$\vcenter{\hbox{{\tikz[baseline=.1ex]{
					\node[vertex] (VO) at (0,0) {};
					\node[vertex] (2VO) at (1.1+0,0) {};
					\draw[-latex, thick] (VO) to[bend left = 14] node[above] {$b$} (2VO);
					\draw[-latex, thick] (2VO) to[bend left = 14] node[below] {$b$} (VO);
	}}}}$. To make drawings more compact, we are going to depict each such pair by a single double-ended arrow without a label: $\vcenter{\hbox{{\tikz[baseline=.1ex]{
	\node[vertex] (VO) at (0,0) {};
	\node[vertex] (2VO) at (1.1+0,0) {};
	\draw[latex-latex, thick] (VO) -- (2VO);
	}}}}$.
	
	Now, let us define the bonding grammar $\TBG = (\mathcal{Z},N,T,\otimes)$ where $N = \{I,O\}$ with $\type(I)=\type(O) = 1$; $T = \{b\}$; $I \otimes O = b$ (as in the proof of Proposition \ref{prop_regular}). To define $\mathcal{Z}$, we need the following definition.
	\begin{definition}
		The \emph{triangle graph} is the hypergraph $\THyp = \vcenter{\hbox{{\tikz[baseline=.1ex]{
						\node[vertex] (V2) at (0,0.8/2) {};
						\node[vertex] (V3) at (0.69/2,-0.4/2) {};
						\node[vertex] (V4) at (-0.69/2,-0.4/2) {};
						\draw[latex-latex, thick] (V2) -- (V3);
						\draw[latex-latex, thick] (V3) -- (V4);
						\draw[latex-latex, thick] (V4) -- (V2);
		}}}}$. Its vertices are denoted by $\tv_1,\tv_2,\tv_3$.
		For $k_1,k_2,k_3 \in \Nat$, the hypergraph $\THyp(k_1,k_2,k_3)$ is obtained from $\THyp$ by attaching $k_i$ new $I$-labeled hyperedges and $k_i$ new $O$-labeled hyperedges to $\tv_i$ for each $i \in \{1,2,3\}$.
	\end{definition}
	\begin{example}\label{example_THyp}
		$
		\THyp(0,1,2) = \vcenter{\hbox{{\tikz[baseline=.1ex]{
						\node[vertex] (V2) at (0,0.8/1.8) {};
						\node[vertex] (V3) at (0.69/1.8,-0.4/1.8) {};
						\node[vertex] (V4) at (-0.69/1.8,-0.4/1.8) {};
						\draw[latex-latex, thick] (V2) -- (V3);
						\draw[latex-latex, thick] (V3) -- (V4);
						\draw[latex-latex, thick] (V4) -- (V2);
						\node[hyperedge] (I11) at (-1.8,0.25) {$I$};
						\node[hyperedge] (O11) at (-0.8,0.25) {$O$};
						\node[hyperedge] (I21) at (0.8,0.25) {$I$};
						\node[hyperedge] (O21) at (1.8,0.25) {$O$};
						\node[hyperedge] (I22) at (2.8,0.25) {$I$};
						\node[hyperedge] (O22) at (3.8,0.25) {$O$};
						\draw[-] (I11) to[bend right = 10] (V4);
						\draw[-] (O11) to[bend right = 0] (V4);
						\draw[-] (I21) to[bend left = 0] (V3);
						\draw[-] (O21) to[bend left = 10] (V3);
						\draw[-] (I22) to[bend left = 12] (V3);
						\draw[-] (O22) to[bend left = 15] (V3);			
		}}}}
		$
	\end{example}
	Finally, let $\mathcal{Z}$ be a tuple consisting of the hypergraphs $T(k_1,k_2,k_3)$ for $0 \le k_1 \le k_2 \le k_3 \le 3$ (one of these hypergraphs is shown in Example \ref{example_THyp}).
	
	We claim that, given an input $G=(V,E)$ of \textsc{5Conn-PiT}, the answer to the question of \textsc{5Conn-PiT} is YES if and only if $\ulcorner G \urcorner$ is generated by $\TBG$. Firstly, assume that the answer to the question of \textsc{5Conn-PiT} is YES, i.e. that $V$ is the disjoint union of 3-element sets $V_1,\dotsc,V_q$ such that the vertices $\{v_j^1,v_j^2,v_j^3\}$ in $V_j$ are pairwise adjacent. Let $E^\prime$ consist of the pairs $(e,1)$ and $(e,2)$ where $e = \{v_j^x,v_j^y\}$ for some $1 \le j \le q, 1 \le x < y \le 3$. Let $\Bond = E_{\ulcorner G \urcorner} \setminus E^\prime$. We claim that $\Bond$ is a bond set for $\ulcorner G \urcorner$. Indeed, after breaking all the bonds from $\Bond$, the subhypergraph of the resulting hypergraph induced by $V_j$ must be of the form $\THyp(k_1,k_2,k_3)$ where $k_i+2$ is the degree of the vertex $v^i_j$ in $G$; thus $k_i \le 3$.
	
	Secondly, assume that $\ulcorner G \urcorner$ is generated by $\TBG$, i.e. that $m \cdot \mathcal{Z} \Rightarrow_\bn^\ast \ulcorner G \urcorner$ for some $m$. The hypergraph $m \cdot \mathcal{Z}$ consists of $q$ connected components $Y_1,\dotsc,Y_q$, each of which is of the form $\THyp(k_1,k_2,k_3)$. In what follows, the vertices in $V_{Y_i}$ are pairwise adjacent in $m \cdot \mathcal{Z}$ and thus they are also adjacent in $\ulcorner G \urcorner$.
	\qed
\end{proof}

Let us conclude. Bonding grammars, defined as a modification of fusion grammars, are NP-complete and enjoy a simple algorithm for checking membership (Remark \ref{remark_NP}). Still, they generate a number of interesting languages, e.g. the language of pseudotori, the language of $k$-regular graphs, an NP-complete language. Even more importantly, bonding grammars generalise regular sticker systems in a natural and biologically adequate way, which additionally supports their motivation and even allows one to regard them as \emph{generalised sticker systems}. 

Apart from bonding, one would like to simulate other operations from the field of DNA computing by means of graph transformations. Studying extensions of bonding grammars with such operations is an open problem. It is also interesting to study how bonding grammars and fusion grammars are related.

\begin{credits}

\subsubsection{\ackname} This work was performed at the Steklov International Mathematical Center and supported by the Ministry of Science and Higher Education of the Russian Federation (agreement no. 075-15-2022-265).

\subsubsection{\discintname}
The author has no competing interests to declare that are relevant to the content of this article.

\end{credits}
%
%
%
 \bibliographystyle{splncs04}
 \bibliography{UCNC_2024}

\end{document}